\renewcommand{\hat}{\widehat}
\numberwithin{equation}{section}
\theoremstyle{plain}%
\newtheorem{Theorem}{Theorem}[section] %
\newtheorem{Lemma}[Theorem]{Lemma} %
\theoremstyle{definition}%
\theoremstyle{remark}%
\newtheorem{Remark}[Theorem]{Remark} %
\renewcommand{\mathcal}{\mathscr}
\newcommand{\cA}{\mathcal{A}}
\newcommand{\cE}{\mathcal{E}}
\newcommand{\cF}{\mathcal{F}}
\newcommand{\cG}{\mathcal{G}}
\newcommand{\cN}{\mathcal{N}}
\newcommand{\cQ}{\mathcal{Q}}
\newcommand{\cT}{\mathcal{T}}
\newcommand{\EE}{\mathbb{E}}
\newcommand{\PP}{\mathbb{P}}
\newcommand{\QQ}{\mathbb{Q}}
\newcommand{\RR}{\mathbb{R}}
\renewcommand{\epsilon}{\varepsilon}
\renewcommand{\P}{\PP}
\newcommand{\E}{\EE}
\begin{document}

\title{Optimal investment with a noisy signal of future stock prices}

\author{Peter Bank}
\address{Technische Universit{\"a}t Berlin,
    Institut f{\"u}r Mathematik, Stra{\ss}e des 17. Juni 136, 10623
    Berlin, Germany} 
\email{bank@math.tu-berlin.de}

\author{Yan Dolinsky}
\address{Department of Statistics, Hebrew University of Jerusalem, Mount Scopus, Israel} 
\email{yan.dolinsky@mail.huji.ac.il}

\thanks{P. Bank is supported in part by the
GIF Grant 1489-304.6/2019.\\
Y. Dolinsky is supported in part by the GIF Grant 1489-304.6/2019 and the ISF grant
230/21.}
\date{\today}
\maketitle
\begin{abstract}
We consider an investor who is dynamically informed about the future evolution of one of the independent Brownian motions driving a stock's price fluctuations. With linear temporary price impact the resulting optimal investment problem with exponential utility turns out to be not only well posed, but it even allows for a closed-form solution. We describe this solution and the resulting problem value for this stochastic control problem with partial observation by solving its convex-analytic dual problem.
\end{abstract}
\begin{description}
\item[Mathematical Subject Classification (2010)] 91G10, 91B16
\item[Keywords] optimal control with partial observation, exponential utility maximization, noisy price signals, duality, temporary price impact
\end{description}

\renewcommand{\theequation}{\arabic{section}.\arabic{equation}}
\pagenumbering{arabic}

\section{Introduction}\label{sec:1}%\setcounter{equation}{0}

Information is arguably a key driver in  stochastic optimal control problems as policy choice strongly depends on assessments of the future evolution of a controlled system and on how these can be expected to change over time. This is particularly obvious in Finance where information drives price fluctuations and where the question how it affects investment decisions is still a topic of great interest for financial economics and mathematics alike. 

In part this may be due to the many ways information can be modeled and due to the various difficulties that come with each such model choice. If, as we are setting out to do in this paper, one seeks to understand how imperfect signals on future stock price evolution can be factored into present investment decisions, one needs to specify how the signal and the stock price are connected and work out the trading opportunities afforded by the signal. For a dynamic specification of the signal and its noise such as ours, this often leads to challenging combinations of filtering and optimal control and we refer to \cite{B:82, FP:82, B:92, T:98, Betal:19} for classical and more recent work in this direction.

The present paper contributes to this literature with a simple Bachelier stock price model driven by two independent Brownian motions. The investor dynamically gets advanced knowledge about the evolution of one of these over some time period, but the noise arising from the other Brownian motion makes perfect prediction of stock price fluctuations impossible. Accounting in addition for temporary price impact from finite market depth (cf.~\cite{Kyle:85}) as suggested in the Almgren-Chriss model \cite{AlmgrenChriss:01}, we  arrive at a viable expected utility maximization problem. In fact, for exponential utility we follow an approach similar to the one of~\cite{BDR:22}, where the case of perfect stock price predictions is considered, and we work out explicitly the optimal investment strategy and maximum expected utility in this case. We can even allow for dynamically varying horizons over which peeks ahead are possible, thus extending beyond the fixed horizon setting of \cite{BDR:22} (which is still included as a special case). 

Our model also complements the literature on optimal order execution where
different forms of market signals have been studied. For instance, \cite{CarteaJaimungal:16} add a Markovian drift to the stock price evolution to model market trends as perceived by the agent; latent factors are added in \cite{CasgrainJaimungal:19}. With transient price impact rather than temporary one as in our setting, \cite{LehalleNeumann:22} and \cite{NeumannVoss:22} study such signals and \cite{Belaketal:18} adds a version with non-Markovian finite variation signals. Yet another version of market signals is considered in \cite{BankCarteaKoerber:23} who use Meyer $\sigma$-fields to model ultra short-term signals on the jumps in the order flow.

Describing how signals are specifically processed into strategies is a challenge for each of these models. In our setting, we find that the investor should optimally weigh her noisy signals of future stock price evolution and aggregate these weights into a projection of its fluctuations. The difference of this projection with the present stock price determines the signal-based part of her investment strategy. The relevant projection weights are computed explicitly and we analyze how they depend on the signal to noise ratio. On top of that, the investor also is interested in chasing the risk premium afforded by the stock and thus aims to take her portfolio to the Merton ratio, well-known from standard expected utility maximization.  In addition, we investigate the value of reductions in noise and show that it is convex in signal quality. In other words, incentives for signal improvement turn out to be skewed in favor of investors who are already good at sniffing out the market's evolution. Finally, we also use the flexibility in the signal horizon of our model to investigate when our investor would most like to be peek ahead if she can do so only once and how much of an advantage a continually probing investor has over her more relaxed counterpart who does so only periodically.

Mathematically, our method is based on the general duality result given in \cite{BDR:22}. Here, however, we need to understand how changes of measures can be optimized when two Brownian motions are affected. Fortunately, we again find the problem to reduce to deterministic variational problems, which, albeit more involved, remain quadratic and thus can still be solved explicitly. The time-dependent knowledge horizon adds further challenges, but also affords us the opportunity to show that the problem is continuous with respect to perturbations in signal reach. 

The present paper generalizes~\cite{BDR:22} and, while similar in its line of attack, the method of solution here differs in essential ways. Indeed, both papers apply the dual approach and do not deal with the primal infinite dimensional
stochastic control problem directly. But the current paper decomposes the information flow into independent parts and passes to a conditional model driven by independent Brownian motions in the `usual' way. This approach does not require applying the results from the theory of Gaussian Volterra integral equations \cite{H:68, HH:93} and readily accommodates time-dependent signal specifications and noise.

Section~\ref{sec:2} of the paper formalizes the optimal investment problem with dynamic noisy signals mathematically. Section~\ref{sec:3} states the main results on optimal investment strategy and utility and comments on some financial-economic implications. Section~\ref{sec:4} is devoted to the proof of the main results.

\section{Problem formulation}\label{sec:2}

Consider an investor who can invest over some time horizon $[0,T]$
both in the money market at zero interest (for simplicity) and in
stock. The price of the stock follows Bachelier dynamics
\begin{align}
  \label{eq:1}
  S_t = S_0 + \mu t + \sigma\left(\gamma
  W'_t+\bar{\gamma}W_t\right), \quad t \in [0,T],
\end{align}
where $S_0 \in \RR$ denotes the initial stock price, $\mu \in \RR$
describes the stock's risk premium and $\sigma \in (0,\infty)$ its
volatility; $\gamma \in [-1,1]$ and $\bar{\gamma}:=\sqrt{1-\gamma^2}$
parameterize the correlation between the stock price process $S$ and
its drivers  $W$, $W'$, two independent Brownian motions
specified on a complete probability space $(\Omega,\cF,\P)$. On top of
the information flow from present and past stock prices as captured by
the augmented filtration $(\cF^S_t)_{t \in [0,T]}$ generated by
$S$, the investor is assumed to have access to a signal which at any
time $t \in [0,T]$ allows her to deduce the future evolution of $W'$
(but \emph{not} of $W$) over a time window $[t,\tau(t)]$ where
$\tau:[0,T] \to [0,T]$ is a right-continuous, 
nondecreasing time
shift satisfying $\tau(t) \geq t$ throughout. In other words, the
investor is able to partially predict the evolution of future 
stock prices, albeit with some uncertainty. The uncertainty is described by a noise  whose variance accrues over $[t,\tau(t)]$ solely from
$W$ at the rate $\sigma^2(1-\gamma^2)$ and accrues
from both $W$ and $W'$ at the joint rate $\sigma^2$ afterwards. As a
result, the investor can draw on the information flow given by the
filtration 
\begin{align}
  \label{eq:2}
  \cG_t := \cF^S_t \vee \sigma\left(W'_u, u \in [0,\tau(t)]\right),
  \quad t \in [0,T],
\end{align}
when making her investment decisions.

In case $\gamma=\pm 1$ all noise is wiped out from the stock price signal
and so the investor gets perfect knowledge of some future prices, affording
her obvious arbitrage opportunities. But even for the complementary
case $\gamma \in (-1,1)$ it is easy to check that $S$ is not a semimartingale and, by the Fundamental Theorem of Asset Pricing of \cite{DelbaenSchachermayer94}, there is a free lunch with vanishing risk (and in fact even a strong arbitrage as can be shown by a Borel-Cantelli argument similar to the one in~\cite{LeventhalSkorokhod95}).
%:
%\begin{Lemma}
%  Suppose that $\tau(t_0)>t_0$ for some $t_0 \in [0,T]$. Then, for any signal correlation %parameter $\gamma \in [-1,1]$ and any $c>0$, an investor with access to the
%  information flow $(\cG_t)_{t \in [0,T]}$ can find a predictable, piece-wise constant investment
%  strategy $\Phi$ with $|\Phi|\leq 1$
%  such that $\int_0^. \Phi_s dS_s$ is uniformly
%  bounded from below by $-1$ with $\int_0^T \Phi_s dS_s \geq c$ almost surely. 
%\end{Lemma}
%\begin{proof}
%  \todo[inline]{Write proof.}
%\end{proof}
As a consequence, we need to curb
the investor's trading capabilities in order to maintain a viable financial model. In line with the economic view of the role and effect of arbitrageurs, we
choose to accomplish this by taking into account the market impact
from the investor's trades. These cause execution prices for
absolutely continuous changes $d\Phi_t = \phi_t dt$ in the investor's position to be given by
\begin{align}
  \label{eq:3}
  S^\phi_t := S_t+\frac{\Lambda}{2}\phi_t, \quad t \in [0,T].
\end{align}
So, when marking to market her position $\Phi_t = \Phi_0 + \int_0^t
\phi_s ds$ in the stock accrued by time $t \in [0,T]$, the
investor will consider her net profit  to be
\begin{align}
  \label{eq:4}
  V^{\phi,\Phi_0}_t& := -\int_0^t S^\phi_s d\Phi_s + \Phi_tS_t-\Phi_0S_0 \\&=
  \Phi_0(S_t-S_0)+\int_0^t
  \left(\phi_s(S_t-S_s)-\frac{\Lambda}{2}\phi^2_s\right)ds, \quad t
  \in [0,T].
\end{align}
As a consequence, the investor will have to choose her turnover rates
from the class of admissible strategies 
\begin{align}
  \label{eq:5}
  \cA:=\left\{\phi=(\phi_t)_{t \in [0,T]} \;:\;\phi \text{ is }
  \cG\text{-optional with } \int_0^T \phi^2_t dt<\infty \text{ a.s.}\right\}.
\end{align}
Assuming for convenience constant absolute risk aversion $\alpha \in
(0,\infty)$, the investor would then seek to
\begin{align}
  \label{eq:6}
  \text{Maximize } \E\left[-\exp\left(-\alpha V^{\phi,\Phi_0}_T\right)\right]
  \text{ over } \phi \in \cA.
\end{align}

\begin{Remark}
  The special case of no signal noise ($\gamma=1$) and constant peek
  ahead period (i.e., $\tau(t) = (t+\Delta) \wedge T$, $t \in [0,T]$, for some
  $\Delta>0$) was solved in \cite{BDR:22}.
\end{Remark}

\section{Main results and their financial-economic discussion}\label{sec:3}

The paper's main results are collected in the following theorem.

\begin{Theorem}\label{thm:1}
  The investor's unique optimal strategy is to average out the risk-premium adjusted stock price estimates
  \begin{align}
    \label{eq:7}
    \hat{S}_{t,h}:= \E[S_{t+h}\;|\;\cG_t]-\mu \bar{\gamma}^2h = S_t+\mu \gamma^2 h +\sigma \gamma(W'_{(t+h)
    \wedge \tau(t)}-W'_t), \quad t,h \in [0,\infty), 
  \end{align}
  to obtain the risk- and liquidity-weighted projection of prices
   \begin{align}
     \label{eq:9}
     \bar{S}_t := \frac{1}{\Upsilon_t(\tau(t)-t)}\left(\int_{0}^{\tau(t)-t} \hat{S}_{t,h}
     \Upsilon'_t(h)dh+\Upsilon_t(0)\hat{S}_{t,\tau(t)-t}\right), \quad t \in [0,T], 
   \end{align}
  % \begin{align}
  %   \label{eq:9}
  %   \bar{S}_t := \int_{[0,\tau(t)-t]} \hat{S}_{t,h} \nu_t(dh)
  % \end{align}
  % where $\nu_t(dh)$ is the normalized kernel
  % \begin{align}
  %   \label{eq:10}
  %   \nu_t(dh) :=\frac{1}{\Upsilon(\tau(t)-t)}\left(\Upsilon'_t(h) 1_{[0,\tau(t)-t))}(h) dh + \Upsilon_t(0)\mathrm{Dirac}_{\tau(t)-t}(dh)\right)
  % \end{align}
  where
  \begin{align}
    \label{eq:11}
    \Upsilon_t(h)&:=\bar{\gamma}
                   \cosh\left(\bar{\gamma}\sqrt{\rho}h\right)%\\&\qquad
    +\tanh\left(\sqrt{\rho}(T-\tau(t))\right)\sinh\left(\bar{\gamma}\sqrt{\rho}h\right),
    \quad h \in [0,\infty),
  \end{align}
  with $\rho:=\alpha\sigma^2/\Lambda$.
  With the projection $\bar{S}_t$ at hand, the investor should then 
  take at time $t \in [0,T]$ into view her present
  stock holdings $\hat{\Phi}_t=\Phi_0+\int_0^t \hat{\phi}_sds$  and
  choose to turn her position over at the rate
  \begin{align}
    \label{eq:12}
    \hat{\phi}_t =
    \frac{1}{\Lambda}(\bar{S}_t-S_t)+\frac{\Upsilon'_t(\tau(t)-t)}{\Upsilon_t(\tau(t)-t)}\left(\frac{\mu}{\alpha \sigma^2}-\hat{\Phi}_t\right).
  \end{align}

  Finally, the maximum utility
  the investor can expect is
  \begin{align}
    \label{eq:13}
    &\sup_{\phi \in \cA}
                     \E\left[-\exp\left(-\alpha V^{\phi,\Phi_0}_T\right)\right]\\
                   &= -\exp\left(\frac{\alpha\Lambda\sqrt{\rho}}{2\coth(\sqrt{\rho}T)}\left(\Phi_0-\frac{\mu}{\alpha \sigma^2}\right)^2-\frac{1}{2}\frac{\mu^2}{\sigma^2}T\right)\\&\quad \cdot \exp\left(-\frac{1}{2}\int_{0}^T\!
\frac{\gamma^2\sqrt{\rho}}
{\overline{\gamma}\coth\left(\overline{\gamma}\sqrt{\rho}\left(t-\tau^{-1}(t)\right)\right)+
                     \tanh \left(\sqrt{\rho}\left(T-t\right)\right)}d t \right),
                   % &= -\exp\left(-\alpha\left(\frac{1}{2}\gamma^2\sqrt{\rho}\int_{\tau(0)}^T
% \frac{\coth\left(\sqrt{\rho}\left(T-t\right)\right)
% \tanh\left(\sqrt{\hat\rho}\left(t-\tau^{-1}(t)\right)\right)}
% {\overline{\gamma}\coth\left(\sqrt{\rho}\left(T-t\right)\right)+
% \tanh\left(\sqrt{\hat\rho}\left(t-\tau^{-1}(t)\right)\right)}dt\right)\right)
  \end{align}
  where $\tau^{-1}(t):=\inf\{s \in [0,T]:\tau(s)> t\}$, $t\in [0,T]$, denotes the right-continuous inverse of $\tau$.
\end{Theorem}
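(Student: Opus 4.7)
The plan is to apply the general duality theorem from \cite{BDR:22}, which reformulates the exponential-utility/temporary-impact maximization as a convex minimization problem over a pair of drift-type control processes, one associated with each of the two Brownian motions $W$ and $W'$. Because the information flow $\cG$ enlarges the natural filtration of $(W,W')$ by the future values of $W'$ on $[t,\tau(t)]$, the admissibility class of the dual controls decomposes asymmetrically: the drift of $W$ must be only $\cG$-optional, whereas the drift of $W'$ is subject to an anticipating measurability constraint tied to the peek-ahead window. The dual functional, thanks to the Gaussian primitives and the quadratic impact cost, will come out purely quadratic in these controls, plus an affine perturbation that encodes the Merton term $\mu/(\alpha \sigma^2)$ and the initial position $\Phi_0$.

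The next step is to condition on the full path $(W'_u)_{u\in[0,T]}$. Since the $\cG_t$-anticipation of $W'$ is fully encoded in this path, and since the residual randomness is carried by $W$ alone (a Brownian motion independent of $W'$), the conditional dual problem becomes a purely deterministic, quadratic variational problem on $[0,T]$, parametrized by the realization of $W'$. The unknown is a real-valued function $y(t)$, representing the dual drift, subject to linear constraints reflecting the $\cG$-measurability and the decomposition of the signal $\hat{S}_{t,h}$ in \eqref{eq:7}. Completing the square and using the Fubini-style argument familiar from \cite{BDR:22}, the optimization separates into an autonomous part (driving the Merton chase) and a signal-dependent part that depends linearly on the projected path $h \mapsto \hat{S}_{t,h}$.

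Solving the deterministic problem is the technical heart of the argument. The first-order conditions produce a linear second-order ODE of the form $y'' = \rho y$ on the interval $[t,\tau(t)]$, coupled with a similar equation on $[\tau(t),T]$ reflecting the post-horizon information structure, together with matching conditions at $\tau(t)$ and the terminal condition at $T$. The fundamental solutions $\cosh(\bar{\gamma}\sqrt{\rho}\,\cdot)$ and $\sinh(\bar{\gamma}\sqrt{\rho}\,\cdot)$ on the peek-ahead segment and $\cosh(\sqrt{\rho}\,\cdot),\sinh(\sqrt{\rho}\,\cdot)$ on the residual segment combine, through the terminal condition propagated back via the matching at $\tau(t)$, into exactly the weight $\Upsilon_t(h)$ of \eqref{eq:11} and its $\tanh(\sqrt{\rho}(T-\tau(t)))$ prefactor. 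Plugging the optimal dual back produces the weighted projection $\bar{S}_t$ in \eqref{eq:9}, and the logarithmic marginal cost of adjusting $y$ at the initial time identifies the optimal feedback $\hat{\phi}_t$ in \eqref{eq:12}. Taking expectations against the Gaussian law of $W'$ yields the closed-form value \eqref{eq:13} via Gaussian integration of the quadratic payoff, the remaining integral in $t$ arising from Fubini applied to the infinitesimal information-update cost over $[0,T]$.

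The main obstacle will be the bookkeeping around the time-dependent peek-ahead $\tau(t)$: the matching conditions of the ODE must be tracked along the curve $\{(t,\tau(t))\}$ rather than at a fixed horizon as in \cite{BDR:22}, and the measurability constraints on the $W'$-drift vary with $t$ accordingly. A careful change of variables $t \mapsto \tau^{-1}(t)$ and a two-stage argument (first fixing the terminal boundary and sweeping backward through $[\tau^{-1}(t),t]$, then piecing together the solution on successive intervals) will be needed to handle this cleanly, as will a limiting argument to treat points where $\tau$ is merely continuous rather than smooth. The continuity of the problem value in $\tau$ hinted at in the introduction will emerge as a by-product, since the deterministic variational problem depends continuously on its horizon data.
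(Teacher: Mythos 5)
Your overall line of attack---the convex dual from \cite{BDR:22}, reduction to quadratic deterministic variational problems, Euler--Lagrange equations of the form $\ddot f=\rho f$ whose hyperbolic solutions, matched across $\tau(t)$ and propagated back from $T$, produce the weight $\Upsilon_t$, plus a limiting argument for non-smooth $\tau$---is indeed the paper's strategy. But there is a genuine gap at the central reduction step. You propose to condition on the \emph{full} path $(W'_u)_{u\in[0,T]}$ and claim the conditional dual problem becomes a purely deterministic variational problem in a single real-valued function $y(t)$. This fails for two reasons. First, at time $t<\tau^{-1}(T)$ the admissible controls may use $W'$ only up to $\tau(t)$; once you condition on the whole path of $W'$, this restriction is no longer a measurability constraint inside the conditional model but a constraint coupling the controls across different realizations of $W'$, so the conditional problems cannot be solved separately and pathwise. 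Second, even granting the conditioning, the residual randomness from $W$ means the dual Girsanov drifts remain genuinely stochastic processes adapted to $\cG$, not deterministic functions; the dual value depends on how these drifts respond to innovations of the driving noises, not only on their deterministic parts.

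The paper's resolution is the idea your proposal is missing: condition only on $\cG_0=\sigma(W'_s,\,s\le\tau(0))$ and time-change the \emph{future} revelations of $W'$ into $B_t=\int_0^{t\wedge\tau^{-1}(T)}\dot\tau(u)^{-1/2}\,dW'_{\tau(u)}$, a stopped Brownian motion adapted to $\cG$ and independent of $W$. Then $\cG$ is the usual filtration generated by two independent Brownian motions, the dual measures are parametrized by ordinary, non-anticipating Girsanov drifts $(\theta,\tilde\theta)$ (no anticipating measurability constraint is needed), and the passage to deterministic problems is achieved by expanding each drift in its It\^o representation and applying It\^o isometry and Fubini. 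This splits the dual target into three separate deterministic quadratic functionals: $\Psi_1$ for the means (carrying $\Phi_0$ and the Merton term), $\Psi_2$ for the response kernels to $B$ (carrying the value of the signal and the integral in \eqref{eq:13}), and $\Psi_3$ for the response kernels to $W$ (optimally zero). Your single function $y$ captures at best the $\Psi_1$ part; you also omit the verification that the resulting lower bound is attained by an actual measure $\QQ\in\cQ$, which the paper supplies via a Volterra/resolvent-kernel construction of the optimal drift, and the dynamic-programming step that upgrades the time-$0$ formula for $\hat\phi_0$ to \eqref{eq:12} for all $t$. Without these ingredients the deterministic ODE analysis, though it would produce the right hyperbolic functions, is not anchored to the actual dual problem.
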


Let us briefly collect some financial-economic observations from this result. First, similar to the case without signal noise discussed in \cite{BDR:22}, an average of future price proxies $\hat{S}_{t,h}$ is formed, but here weights are time-inhomogenous. In fact, the weights used in this averaging are determined in terms of the function $\Upsilon$ and interestingly depend on the peek-ahead horizon $\tau$ only through its present value $\tau(t)$. As a consequence, when deciding about the time $t$ turnover rate $\hat{\phi}_t$, the investor does not care if her information horizon $\tau$ is going to stall or to jump all the way to $T$ soon. The reason for this is the investor's exponential utility which makes investment decisions insensitive to present wealth and thus does not require planning ahead. It would be interesting to see how this changes with different utility functions. Unfortunately, our method  to obtain an explicit solution to the optimal investment problem  strongly depends on our choice of exponential utility, leaving this a challenge for future research.  

Second, it is interesting to assess the impact of signal noise on investment decisions. Here, we observe that the emphasis $\Upsilon_t(0)/\Upsilon_t(\tau(t)-t)$ that the projection $\bar{S}_t$ puts on the last learned signal $\hat{S}_{t,\tau(t)-t}$ decreases when noise gets stronger due to an increase in $\overline{\gamma}$. This reflects the investor's reduced trust in this most noisy of her signals when noise becomes more prominent. %Also the ratio of the weight assigned to $\hat{S}_{t,h}$, $h>0$, with the one assigned to $\hat{S}_{t,0}=S_t$ changes from 0 to 1 as $\overline{gamma}$ increases from 0 (no noise) to 1 (no signal) \todo{So???}  

Third, we can assess how the presence of noise affects the value of the signal $W'$. To this end, let us consider the certainty equivalent
\begin{align}\label{certain}
    c(\gamma)&:=\frac{1}{\alpha} \log \frac{\sup_{\phi \in \cA }
                     \E\left[-\exp\left(-\alpha V^{\phi,\Phi_0}_T\right)\right]}{\sup_{\phi \in \cA}
                     \E\left[-\exp\left(-\alpha \tilde{V}^{\phi,\Phi_0}_T\right)\right]}\\
                     &= \frac{1}{2\alpha}\int_{0}^T\!
\frac{\gamma^2\sqrt{\rho}}
{\overline{\gamma}\coth\left(\overline{\gamma}\sqrt{\rho}\left(t-\tau^{-1}(t)\right)\right)+
                     \tanh \left(\sqrt{\rho}\left(T-t\right)\right)}d t 
\end{align}
that an agent, who can eliminate a fraction $\gamma^2$ from the variance in stock price noise $\gamma W'+\sqrt{1-\gamma^2}W$ by observing part of $W'$, gains compared to her peer without that privilege (whence her terminal wealth $\tilde{V}_T$ is determined by $\tilde{S}_t=S_0+\mu t + \sigma W_t$, $t\in [0,T]$, instead of $S$). Contributions to this quantity from times long before the investment horizon ($T-t$ large) when the signal $W'_t$ has been received for a long time ($t-\tau^{-1}(t))$ large), depend on $\gamma$ by a factor of about $\gamma^2/(\sqrt{1-\gamma^2}+1)$. This factor increases from 0 to 1 as $\gamma$ increases from 0 (no signal) to 1 (noiseless signal), with the steepest increase at $\gamma=1$, indicating limited, but ever higher returns from noise reductions. Conversely, an increase in noise is making itself felt the most when the signal is already quite reliable. 

Finally, the flexibility concerning the form of the peek-ahead length given by $\tau:[0,T] \to [0,T]$ affords one to account for periods when predictions are harder to make over periods where they are easier. Moreover, this flexibility also allows us to shed light on some natural questions such as the following ones:
\begin{enumerate}
    \item[(i)] \emph{When best to peek ahead?} Suppose that the investor has an opportunity to choose a (deterministic) moment of time
    $\mathbb T\in [0,T]$ when, for one time only, she can peek ahead over some period $\Delta>0$ 
    into the future. What time $\mathbb T$ should she choose? We can formalize this as an optimization problem over the family of time changes 
     \[
 \tau_{\mathbb T}(t) = \left.
  \begin{cases}
    t, & \text{for } t \leq \mathbb T \\
    (t\vee ( \mathbb T+ \Delta)) \wedge T, & \text{otherwise}\\
    \end{cases}
  \right\},\ \ \  \mathbb T\in [0,T].
 \]
    Clearly, the corresponding inverse function is given by 
    \[
 \tau^{-1}_{\mathbb T}(t) = \left.
  \begin{cases}
   \mathbb T, & \text{for } \mathbb T\leq t\leq\mathbb T+\Delta\\ 
     t, & \text{otherwise}\\
    \end{cases}
  \right\}.
 \]
 Hence, 
 in view of 
    \eqref{certain}  we need to maximize over $\mathbb T\in [0,T]$ the value of  
    $$\int_{\mathbb T}^{(\mathbb T+\Delta)\wedge T}
\frac{dt}
{\overline{\gamma}\coth\left(\overline{\gamma}\sqrt{\rho}\left(t-\mathbb T\right)\right)+
                     \tanh \left(\sqrt{\rho}\left(T-t\right)\right)}\rightarrow\max.$$
Observe that 
\begin{align*}
&\int_{\mathbb T}^{(\mathbb T+\Delta)\wedge T}
\frac{dt}
{\overline{\gamma}\coth\left(\overline{\gamma}\sqrt{\rho}\left(t-\mathbb T\right)\right)+
                     \tanh \left(\sqrt{\rho}\left(T-t\right)\right)}\\
                     &=\int_{0}^{(T-\mathbb T)\wedge \Delta}
\frac{dt}
{\overline{\gamma}\coth\left(\overline{\gamma}\sqrt{\rho}t\right)+
                     \tanh \left(\sqrt{\rho}\left(T-\mathbb T-t\right)\right)}\\
                     &\leq
   \int_{0}^{T\wedge \Delta}
\frac{dt}
{\overline{\gamma}\coth\left(\overline{\gamma}\sqrt{\rho}t\right)+
                     \tanh \left(\sqrt{\rho}\left(T-t\right)\right)}\\                  
\end{align*}
We conclude that the optimal time is $\mathbb T=0$ and so one should peek ahead immediately.

\item[(ii)] \emph{How much of an advantage does continual probing have over periodic probing?} For this we can compare the continual probing where we keep predicting what is going to happen over the next $\Delta$ time units with its discrete counter part where we only peek ahead every $\Delta$ units:
\begin{align*}
    \tau^c(t)=(t+\Delta) \wedge T \quad \text{vs.} \quad \tau^p(t)=\left((\lfloor \frac{t}{\Delta} \rfloor+1)\Delta \right) \wedge T,
\end{align*}
where $\lfloor 2.8 \rfloor=2$ denotes the floor function. Clearly, the continual probing is superior to periodic probing, but its advantage depends on the probing period length $\Delta$. Indeed, Figure~\ref{fig:enter-label} below shows the difference in certainty equivalents for the choices $\Delta=T/n$ corresponding to $n=1,2,\dots$ updates in the periodic case. We see that, for the considered parameters, the periodically updating investor will be at the greatest disadvantage compared to her more diligent, continually probing counterpart when updating about 80 times for periods of length $\Delta=T/80$. 
\begin{figure}[h]
    \centering
    \includegraphics[width=.6\linewidth]{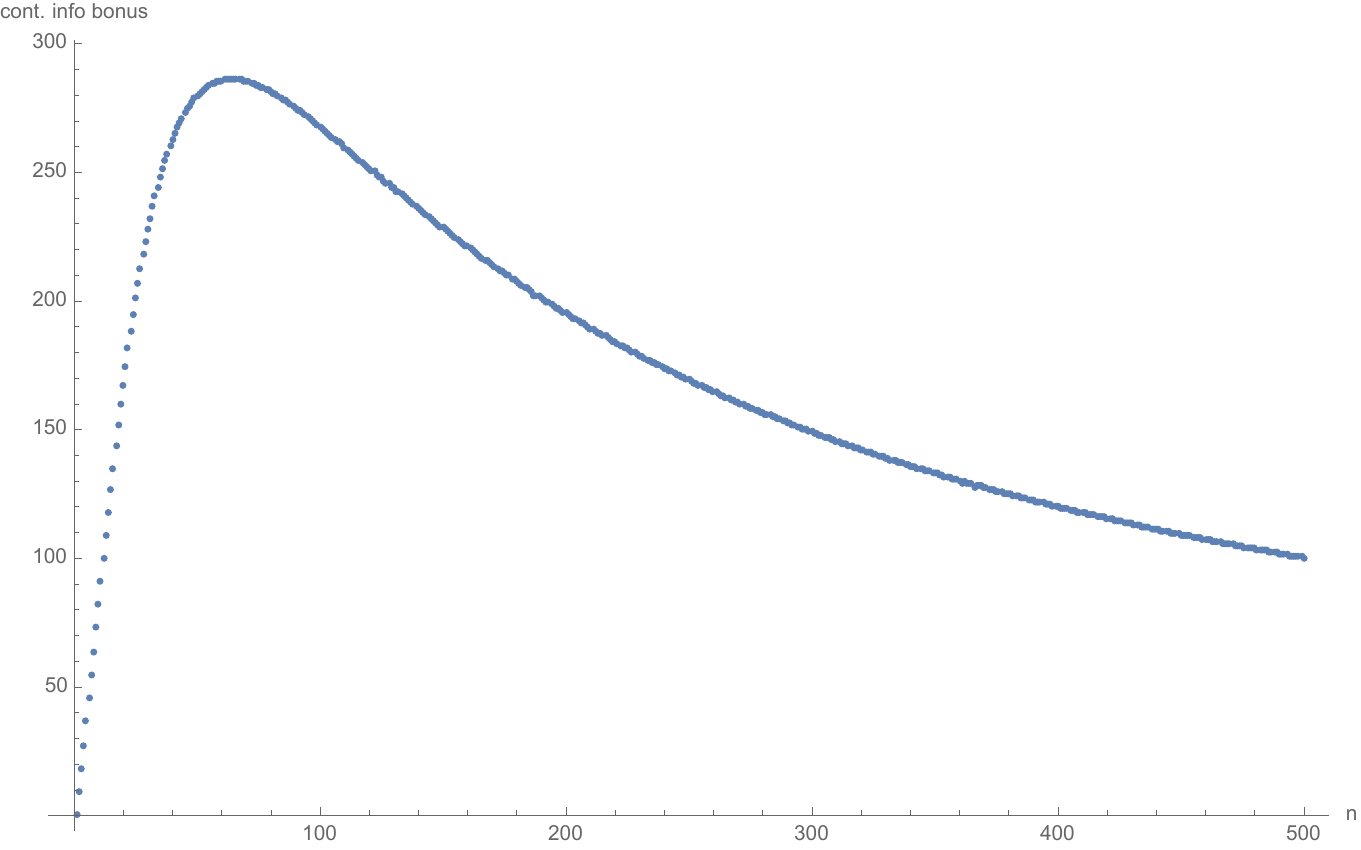}
    \caption{Continual vs.\ periodic probing: Difference of certainty equivalents for $\Delta=T/n$, $n=1,2,\dots,500$, when $\gamma=0.7$, $\rho=0.49$, $\alpha=0.01$, $T=100$.}
    \label{fig:enter-label}
\end{figure}
\end{enumerate}

\section{Proof}\label{sec:4}
We start this section by outlining the main steps of the proof. 
The first step in Section 4.1 deals with simplifying the general setup to obtain a notationally and computationally more convenient special case of our problem. The second step (Section 4.2) consists of the main idea which is to apply duality theory.  We show that the dual problem can be reduced to the solution of three separate deterministic variational problems. 
The third step (Section 4.3) is to solve these deterministic problems, which are quite involved and for which we therefore compute only their values and the essential properties of the corresponding optimal control. The fourth step (Section 4.4) combines the second and the third step to solve the dual problem. The last step (Section 4.5) uses duality and the Markovian structure of our problem to finally construct the solution to the primal problem from the dual solution.

\subsection{Simplification of our setting}

\paragraph{\emph{Parameter reduction.}}
Let us first note that it suffices to treat the special case where
$S_0=0$, $\sigma=1$, $\mu=0$, $\alpha=1$:

\begin{Lemma}
  If $\hat{\phi}^0 \in \cA$ is optimal when starting with the initial
  position $\Phi^0_0:=\alpha\sigma \Phi_0-\mu/\sigma$ in the baseline model with
  parameters $S^0_0:=0$, $\sigma^0:=1$, $\mu^0:=0$, $\alpha^0:=1$ and
  $\Lambda^0:=\Lambda/(\alpha \sigma^2)$, then the optimal strategy in
  the model with general parameters $\Phi_0$, $S_0$, $\sigma$, $\mu$, $\alpha$
  and $\Lambda$ is given by
  \begin{align}
    \label{eq:28}
    \hat{\phi}_t
    (\omega)=\frac{1}{\alpha \sigma}\hat{\varphi}_t\left(\left(W_s(\omega)+\frac{\mu}{\sigma}\gamma
    s,W'_s(\omega) +\frac{\mu}{\sigma}\overline{\gamma} s\right)_{s
    \in [0,T]}\right), \quad t \in [0,T],
  \end{align}
  where $\hat{\varphi}$ is the measurable functional on $C[0,T]\times C[0,T]$ for
which $\hat{\phi}^0=\hat{\varphi}(W',W)$.  Morevorer, the maximum
expected utility is
$\exp\left(-\frac{1}{2}\frac{\mu^2}{\sigma^2}T\right)$ times as high
as the maximum utility in the baseline model.
\end{Lemma}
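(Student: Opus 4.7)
The plan is to reduce the general parameter problem to the baseline one by a Girsanov change of measure that kills the drift $\mu t$ in $S$, combined with a linear rescaling of $\phi$, $\Phi_0$ and $\Lambda$ that absorbs $\alpha$ and $\sigma$. Since the wealth $V^{\phi,\Phi_0}_t$ depends on $S$ only through the increments $S_t-S_s$, the initial price $s_0$ plays no role and I simply set $s_0=0$.

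For the drift step, I introduce $\tilde W'_t := W'_t+(\mu\gamma/\sigma)t$ and $\tilde W_t := W_t+(\mu\bar\gamma/\sigma)t$ together with the equivalent measure
\begin{align}
\frac{d\QQ}{d\P} := \exp\left(-\frac{\mu\gamma}{\sigma}W'_T-\frac{\mu\bar\gamma}{\sigma}W_T-\frac{\mu^2 T}{2\sigma^2}\right).
\end{align}
By Girsanov and $\gamma^2+\bar\gamma^2=1$, $(\tilde W',\tilde W)$ is a two-dimensional Brownian motion under $\QQ$ and $S_t=\sigma(\gamma\tilde W'_t+\bar\gamma\tilde W_t)$ becomes driftless. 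Because $\tilde W$ and $\tilde W'$ differ from $W$ and $W'$ only by deterministic drifts, the filtration $\cG$ and the admissibility class $\cA$ coincide whether expressed in terms of $(W,W')$ or $(\tilde W,\tilde W')$ and are unaffected by the equivalent change of measure.

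For the rescaling step, I set $\phi^0_t:=\alpha\sigma\phi_t$, $\Phi_0^0:=\alpha\sigma\Phi_0-\mu/\sigma$ and $S^0_t:=\gamma\tilde W'_t+\bar\gamma\tilde W_t$, and write $V^{\phi^0,\Phi_0^0}$ for the wealth in the baseline model with parameters $s_0^0,\sigma^0,\mu^0,\alpha^0,\Lambda^0:=\Lambda/(\alpha\sigma^2)$ on $(\Omega,\cF,\QQ)$. The choice $\Lambda^0=\Lambda/(\alpha\sigma^2)$ is dictated by matching the impact costs $\tfrac{\alpha\Lambda}{2}\phi_s^2=\tfrac{\Lambda^0}{2}(\phi^0_s)^2$, while the choice of $\Phi_0^0$ is precisely what makes the linear contribution $(\alpha\sigma\Phi_0-\Phi_0^0)(\gamma\tilde W'_T+\bar\gamma\tilde W_T)=(\mu/\sigma)(\gamma\tilde W'_T+\bar\gamma\tilde W_T)$ coming from the shifted initial position cancel the linear part of $\log(d\P/d\QQ)$. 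A direct expansion then yields the identity
\begin{align}
-\alpha V^{\phi,\Phi_0}_T+\log\frac{d\P}{d\QQ} = -V^{\phi^0,\Phi_0^0}_T-\frac{\mu^2 T}{2\sigma^2},
\end{align}
so that $\E^\P[-\exp(-\alpha V^{\phi,\Phi_0}_T)]=\exp(-\mu^2T/(2\sigma^2))\,\E^\QQ[-\exp(-V^{\phi^0,\Phi_0^0}_T)]$. Taking suprema over $\phi\in\cA$ (equivalently, over $\phi^0\in\cA$ via the bijection $\phi^0=\alpha\sigma\phi$) produces the claimed factor $\exp(-\mu^2 T/(2\sigma^2))$ between the two problem values, and translating the baseline optimizer $\hat\phi^0=\hat\varphi(\tilde W',\tilde W)$ back to $(W,W')$ yields the stated formula for $\hat\phi_t$. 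The main obstacle is the clean algebraic verification of the displayed identity above; once it is in hand, the remainder follows from standard Girsanov theory together with the invariance of $\cG$ and $\cA$ under the equivalent change of measure.
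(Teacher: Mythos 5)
Your proposal is correct and follows essentially the same route as the paper: the same Girsanov change of measure $d\QQ/d\P=\cE\bigl(-\tfrac{\mu}{\sigma}(\gamma W'+\bar\gamma W)\bigr)_T$ turning $(\tilde W',\tilde W)$ into a driftless pair, the same rescaling $\phi^0=\alpha\sigma\phi$ with $\Lambda^0=\Lambda/(\alpha\sigma^2)$ and the shift $\Phi_0^0=\alpha\sigma\Phi_0-\mu/\sigma$ absorbing the linear term of the log-density. The algebraic identity you flag as the remaining obstacle does hold by direct expansion of $V^{\phi,\Phi_0}_T$ (the paper states the resulting equality of expected utilities without spelling it out), so nothing is missing.
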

\begin{proof}
 Let $\phi=\varphi(W,W') \in \cA$ be the representation of an arbitrary
 admissible strategy as a functional of the underlying Brownian motions
 $W$ and $W'$. Then $(\tilde{W},\tilde{W}'):=\left(W_s(\omega)+\frac{\mu}{\sigma}\gamma
    s,W'_s(\omega) +\frac{\mu}{\sigma}\overline{\gamma} s\right)_{s
    \in [0,T]}$, 
 is a standard two-dimensional Brownian motion under $\tilde{\P}\approx \P$ with $d\tilde{\P}/d\P=\cE(-\frac{\mu}{\sigma}(\gamma
 W'+\overline{\gamma}W))_T$, $(\tilde{W},\tilde{W}')$ and we can write
 the expected utility from $\phi$ as
 \begin{align}
   \label{eq:155}
   \E[-\exp(-\alpha V_T^{\phi,\Phi_0})]&=\exp\left(-\frac{1}{2}\frac{\mu^2}{\sigma^2}T\right)\tilde{\E}\left[-\exp\left(
                                         -\tilde{V}_T^{\varphi^0_0,\tilde{\phi}}\right)\right]
 \end{align}
 where $\tilde{V}_T^{\Phi^0_0,\tilde{\phi}}$ is the terminal wealth
 generated in the baseline model from the initial position
 $\Phi^0_0$ by the strategy given by
   \begin{align}
    \label{eq:288}
    \tilde{\phi}_t
    (\omega)={\alpha \sigma}{\varphi}_t\left(\left(\tilde{W}(\omega),\tilde{W}'(\omega)\right)_{s \in [0,T]}\right) , \quad t \in [0,T].
  \end{align}
 In the baseline model,
 $\hat{\phi}^0=\hat{\varphi}(\tilde{W},\tilde{W}')$ is optimal by
 assumption and we obtain an upper bound for the maximum utility in
 the general model. This bound is sharp since it is attained by
 $\hat{\phi}$ as in the formulation of the lemma.
\end{proof}

\paragraph{\emph{Reduction to differentiable, increasing time changes.}}
Let us argue next why we can assume without loss of generality that
\begin{align}
  \label{eq:14}
  \tau \text{ is continuously differentiable and strictly increasing
  on } [0,\tau^{-1}(T)],
\end{align}
and that it suffices to do the computation of the value of the
problem~\eqref{eq:13} for $\tau$ satisfying, in addition, $\tau(0)=0$.

In fact, using that the description of optimal policies from
Theorem~\ref{thm:1} holds for $\tau$ satisfying~\eqref{eq:14} and
that~\eqref{eq:13} holds when, in addition, $\tau(0)=0$ (which we will prove independently below), we can even
prove that the optimization problem depends continuously on $\tau$ and
thus it suffices to consider smooth $\tau$ with~\eqref{eq:14}:

\begin{Lemma}
  If $\tau_n(t)\to \tau_\infty(t)$ for $t=T$ and for every continuity
  point $t$ of $\tau_\infty$ (i.e., if $d\tau_n \to d\tau_\infty$ weakly as Borel
  measures on $[0,T]$), then the optimal terminal wealth associated
  with $\tau_n$ converges almost surely to the one associated with
  $\tau_\infty$ and the problem value for $\tau_n$ converges to the one of
  $\tau_\infty$.
\end{Lemma}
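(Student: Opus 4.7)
The plan is to prove both statements by a sandwich argument: combine monotonicity of the primal value in $\tau$ with continuity of the closed-form right-hand side of \eqref{eq:13} in $\tau$, then leverage \eqref{eq:12} to obtain the a.s.\ convergence of the optimal wealth.

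\emph{Monotonicity.} If $\tau\le\tilde{\tau}$ pointwise, then $\cG^\tau_t\subseteq\cG^{\tilde{\tau}}_t$ for every $t$, so every $\cG^\tau$-admissible strategy is also $\cG^{\tilde{\tau}}$-admissible. Since both problems share the same objective \eqref{eq:6}, the primal values $J(\tau):=\sup_{\phi\in\cA}\E\bigl[-\exp(-\alpha V^{\phi,\Phi_0}_T)\bigr]$ satisfy $J(\tau)\le J(\tilde{\tau})$.

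\emph{Continuity of the explicit formula.} Let $\cV(\tau)$ denote the right-hand side of \eqref{eq:13}. Its sole $\tau$-dependence sits in
\[
I(\tau):=\int_0^T \frac{\gamma^2\sqrt{\rho}}{\overline{\gamma}\coth\!\bigl(\overline{\gamma}\sqrt{\rho}(t-\tau^{-1}(t))\bigr)+\tanh\!\bigl(\sqrt{\rho}(T-t)\bigr)}\,dt,
\]
whose integrand is bounded by $\gamma^2\sqrt{\rho}/\overline{\gamma}$ in the nontrivial regime $\overline{\gamma}>0$ (the case $\overline{\gamma}=0$ reduces to the setting of \cite{BDR:22}). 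Weak convergence $d\tau_n\to d\tau_\infty$ on $[0,T]$ is exactly pointwise convergence $\tau_n(t)\to\tau_\infty(t)$ at every continuity point of $\tau_\infty$, and a standard quantile-convergence argument then gives pointwise convergence of the generalized inverses $\tau_n^{-1}\to\tau_\infty^{-1}$ at Lebesgue-a.e.\ $t$. Dominated convergence yields $\cV(\tau_n)\to \cV(\tau_\infty)$.

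\emph{Sandwich.} For a general continuous nondecreasing $\tau_\infty$ with $\tau_\infty(t)\ge t$, I would construct two smooth strictly-increasing sequences $\underline{\tau}_n\le\tau_\infty\le\overline{\tau}_n$, each satisfying \eqref{eq:14}, with $d\underline{\tau}_n$ and $d\overline{\tau}_n$ converging weakly to $d\tau_\infty$. Natural candidates are $(1-1/n)\tau_\infty(t)+t/n$ from below and $(\tau_\infty(t)+1/n)\wedge T$ from above, each mollified by a sufficiently narrow smooth bump (with a tiny linear perturbation added if needed to restore strict monotonicity). Monotonicity combined with Theorem~\ref{thm:1} applied to the smooth approximants gives
\[
\cV(\underline{\tau}_n)\;=\;J(\underline{\tau}_n)\;\le\;J(\tau_\infty)\;\le\;J(\overline{\tau}_n)\;=\;\cV(\overline{\tau}_n),
\]
and the previous step forces both outer terms to $\cV(\tau_\infty)$. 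Hence $J(\tau_\infty)=\cV(\tau_\infty)$, and the value-convergence statement of the lemma follows at once from the continuity of $\cV$.

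\emph{Convergence of the optimal wealth.} The coefficient $\Upsilon'_t(\tau(t)-t)/\Upsilon_t(\tau(t)-t)$ in \eqref{eq:12} depends on $\tau$ only through $\tau(t)$ via \eqref{eq:11}, while $\bar{S}_t$ in \eqref{eq:9} depends only on $\tau(t)$ and on the restriction of $W'$ to $[0,\tau(t)]$. Both are therefore continuous in $\tau(t)$ by continuity of Brownian sample paths and elementary analysis of \eqref{eq:11}. Hence under $d\tau_n\to d\tau_\infty$ weakly, $\hat{\phi}^n_t(\omega)\to \hat{\phi}^\infty_t(\omega)$ for a.e.\ $(t,\omega)$. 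Since \eqref{eq:12} is a linear ODE in $\hat{\Phi}_t$ whose integrating factor depends continuously on $\tau$, the standard continuous dependence of linear ODEs on their coefficients yields a.s.\ convergence of $\hat{\Phi}^n_T$ and of $\int_0^T(\hat{\phi}^n_s)^2\,ds$; the representation \eqref{eq:4} then delivers a.s.\ convergence of the optimal terminal wealth.

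\emph{Main obstacle.} The subtle point is not the monotonicity or the algebraic continuity of the formula, but securing enough domination in the wealth-convergence step to pass to the limit in the quadratic integral $\int_0^T(\hat{\phi}^n_s)^2\,ds$. This requires uniform upper bounds in $n$ on $1/\Upsilon^n_t(\tau_n(t)-t)$ and on the projection weights $\Upsilon^{n\prime}_t/\Upsilon^n_t$, which one extracts directly from \eqref{eq:11} using $\overline{\gamma}>0$ and the boundedness of $T-\tau_n(t)$ on the compact window $[0,T]$.
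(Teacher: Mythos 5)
Your overall strategy --- approximate by smooth time changes, exploit monotonicity of the value in $\tau$, continuity of the closed-form expressions, and ODE stability for the wealth --- follows the same skeleton as the paper's proof, and your monotonicity and dominated-convergence steps are sound. But the sandwich step has a genuine gap. The identity $J(\tau)=\cV(\tau)$ is \emph{not} available for all $\tau$ satisfying \eqref{eq:14}: the independent dual computation (Lemma~\ref{lem:Gamma1}, see \eqref{27}) delivers the value formula only under the additional restriction $\tau(0)=0$, and the present lemma is precisely the device by which the paper extends it beyond that case. Your upper approximants $\overline{\tau}_n\geq\tau_\infty$ necessarily satisfy $\overline{\tau}_n(0)\geq\tau_\infty(0)>0$ whenever $\tau_\infty(0)>0$, so the equality $J(\overline{\tau}_n)=\cV(\overline{\tau}_n)$ is unjustified, and assuming it is circular; the same objection applies to your lower approximants $(1-1/n)\tau_\infty(t)+t/n$ unless you force them to start at $0$. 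The paper's way around this is asymmetric: for the upper bound it approximates from above by smooth $\tau_n$, uses only the \emph{wealth} identity $V(\tau_n)=\tilde V(\tau_n)$ (which is available for all smooth strictly increasing $\tau$), obtains $V(\tau_n)\to\tilde V(\tau_\infty)$ a.s.\ by ODE stability, and applies Fatou's lemma to conclude simultaneously that $\tilde V(\tau_\infty)$ is optimal for $\tau_\infty$ and that the values converge.

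Second, for the approximation from below by $\tau_n\in\cT_0$ (with $\tau_n(0)=0$), monotonicity only yields $v(\tau_\infty)\geq\lim_n v(\tau_n)=\tilde v(\tau_\infty)$; the reverse inequality cannot come from a comparison strategy, since the optimizer for $\tau_\infty$ uses more information than is admissible for $\tau_n$. The paper secures this direction by a uniform-integrability argument: it introduces a smaller risk aversion $\alpha'<\alpha$, bounds $\exp(-\alpha'V(\tau_n))$ in $L^p$ with $p=\alpha/\alpha'>1$ using the already-known values $-\tilde v(\tau_n)$, passes to the limit in the expectations, and then lets $\alpha'\uparrow\alpha$. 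Your proposal contains no substitute for this step. Finally, your wealth-convergence paragraph tacitly assumes that the optimizer for the (possibly non-smooth) $\tau_n$ and $\tau_\infty$ exists and is given by \eqref{eq:12}; existence and identification of the optimizer for general $\tau$ is itself a conclusion of the Fatou argument above, not an input one may take for granted.
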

\begin{proof}
   Let us denote by $v(\tau)$ the value of our problem in dependence
   on $\tau$ and by $\tilde{v}(\tau)$ its claimed value from the
   right-hand side of~\eqref{eq:13}. Similarly, denote by $V(\tau)$
   the optimal terminal wealth (if it exists) for $\tau$ and by
   $\tilde{V}(\tau)$ our candidate described in
   Theorem~\ref{thm:1}. Let us furthermore denote by $\cT$ the class
   of all $\tau$ satisfying~\eqref{eq:14} and by $\cT_0 \subset \cT$
   those $\tau$ which in addition satisfy $\tau(0)=0$. We will assume
   that $V(\tau)=\tilde{V}(\tau)$ for $\tau \in \cT$ and
   $v(\tau)=\tilde{v}(\tau)$ for $\tau \in \cT_0$ as these statements will be
   derived below independently from this lemma.

   Let us observe first that both $\tilde{V}(\tau)$ and
   $\tilde{v}(\tau)$ continuously depend on $\tau$. For $\tilde{v}$
   this readily follows from~\eqref{eq:13} by dominated convergence;
   for $\tilde{V}$ this is due to the stability of the linear
   ODE~\eqref{eq:12} and the continuous dependence of its coefficients
   on $\tau$.
     
   Let us next argue why an optimal control for general $\tau$ exists and
   its terminal wealth is $V(\tau)=\tilde{V}(\tau)$. For this take
   $\tau_n \in \cT$ converging to $\tau$ from above and observe
   that $V(\tau_n)=\tilde{V}(\tau_n) \to \tilde{V}(\tau)$.  By Fatou's
   lemma we conclude that
   $\limsup_n \E[-\exp(-\alpha V(\tau_n))] \leq \E[-\exp(-\alpha
   \tilde{V}(\tau))] \leq v(\tau)$. Because $\tau_n \geq \tau$, any
   competitor $\phi \in \cA(\tau)$ is also in $\cA(\tau_n)$ and thus
   $v(\tau_n)=\E[-\exp(-\alpha V(\tau_n))] \geq \E[-\exp(-\alpha
   V_T^{\phi,\Phi_0})] $.  In conjunction with the preceding
   estimate, this shows that the candidate for general $\tau$ is indeed
   optimal with value $v(\tau)=\tilde{v}(\tau)=\lim_n v(\tau_n)$.

  Let us conclude by arguing why it suffices to compute the problem
  value for $\tau \in \cT_0$. For this choose $\tau_n \in \cT_0$ to
  converge to a general $\tau$ from below. From the above, we know
  that $v(\tau) \geq \tilde{v}(\tau)=\lim_n v(\tau_n)$. To see that conversely
  $\tilde{v}(\tau) \geq {v}(\tau)$ (and conclude), take a
  risk-aversion $\alpha'<\alpha$ and observe that the corresponding
  (candidate) problem values $v_{\alpha'}(.)$ and $\tilde{v}_{\alpha'}(.)$ satisfy
  \begin{align}
    \label{eq:29a}
    \tilde{v}_{\alpha'}(\tau) = \lim_n v_{\alpha'}({\tau}_n) 
    \geq \lim_n \E[-\exp(-\alpha' V(\tau_n))] =\E[-\exp(-\alpha' V(\tau))].
  \end{align}
  Indeed, the first identity is due to the stability of the right-hand
  side of~\eqref{eq:13} in $\tau$; the estimate holds because the
  optimal strategy for risk-aversion $\alpha$ is also admissible for
  risk-aversion $\alpha'$; the final identity holds because of uniform
  integrability which in turn follows from boundedness in $L^p(\P)$ with
  $p=\alpha/\alpha'>1$:
  \begin{align}
    \label{eq:32}
    \sup_n \E[\exp(-\alpha' {V}(\tau_n))^p]=\sup_n \E[\exp(-\alpha {V}(\tau_n))] = \sup_n(-\tilde{v}(\tau_n))<\infty.
  \end{align}
  Letting $\alpha' \uparrow \alpha$ in~\eqref{eq:29a}, its left-hand
  side converges to $\tilde{v}(\tau)$ by continuous dependence on
  $\alpha$ of the right-side of~\eqref{eq:13}; the right-hand side
  in~\eqref{eq:29a} converges by monotone convergence to $\E[-\exp(-\alpha
 V(\tau))]=v(\tau)$ and we are done.
\end{proof}

\paragraph{\emph{Decomposing the filtration into independent Brownian parts and passage
  to a conditional model.}}
For time changes $\tau$ satisfying~\eqref{eq:14}, we can introduce 
\begin{align}
  \label{eq:17}
  B_t:=\int_{0}^{t\wedge\tau^{-1}(T)}
  \frac{1}{\sqrt{\dot{\tau}(u)}}dW'_{\tau(u)}, \quad t \in [0,T],
\end{align}
and readily check that it is a Brownian motion stopped at time
$\tau^{-1}(T)\leq T$ which is independent of both
$W$ and $(W'_s)_{s \in [0,\tau(0)]}$. Moreover, using that
$\tau^{-1}(t)=0$ for $t \in [0,\tau(0)]$, we can write the stock
price dynamics as
\begin{align}
  \label{eq:19}
  S_t=\gamma W'_{t\wedge \tau(0)}+\gamma \int_{0}^{\tau^{-1}(t)}
\sqrt{\dot{\tau}(s)}dB_s+\overline{\gamma} W_t, \quad  t\in [0,T],
\end{align}
and view the
insider's filtration as generated by the following independent
components:
\begin{align}
  \label{eq:18}
  \cG_t = \sigma(W'_s, \; s \in [0,\tau(0)]) \vee \sigma(B_s, \; s \in
  [0, t \wedge \tau^{-1}(T)])\vee \sigma(W_s,
  \; s \in [0,t]) \vee \cN,
\end{align}
for $t \in [0,T]$ and with $\cN$ denoting the collection of $\P$-nullsets. So, maximizing
expected utility conditional on $\cG_0=\sigma(W'_s, \; s \in [0,\tau(0)]) \vee \cN$ amounts to maximizing
an \emph{unconditional} expected utility in a model with probability
$\P_0$ where, by a slight abuse of notation, asset prices
evolve according to
\begin{align}
  \label{eq:20}
  S_t:=\gamma w(t\wedge \tau(0))+\gamma \int_{0}^{\tau^{-1}(t)}
\sqrt{\dot{\tau}(s)}dB_s+\overline{\gamma} W_t, \quad  t\in [0,T],
\end{align}
for some deterministic path segment $w \in C([0,\tau(0)])$, a
$\P_0$-Brownian motion $B$ stopped at time $\tau^{-1}(T)$ and a
standard $\P_0$-Brownian motion $W$ which generate the filtration
$$
\cG^\tau_t:=\sigma(B_s, \; s \in
  [0, t \wedge \tau^{-1}(T)])\vee \sigma(W_s,
  \; s \in [0,t]) \vee \cN, \quad t \in [0,T],
$$
specifying the information flow for admissible strategies.

\subsection{Duality}

For the unconditional expected utility maximization under the measure
$\P_0$ identified above, we can apply
Proposition~A.2 in \cite{BDR:22} to deduce that we can proceed by solving the
dual problem with value
\begin{align}\label{eq:dualproblem}
\inf_{\mathbb Q\in\mathcal Q}\mathbb E_{\mathbb Q}\left[\Phi_0(S_T-S_0)+\log \frac{d\mathbb Q}{d\mathbb P_0}+\frac{\rho}{2}\int_{0}^T\left|
\mathbb E_{\mathbb Q}\left[S_T\middle|\mathcal
  G^{\tau}_t\right]-S_t\right|^2 dt\right]
\end{align}
where $\rho:=1/\Lambda$ and $\mathcal Q$ is the set of all probability
measures $\mathbb Q\approx \mathbb P_0$ with finite relative entropy
$\mathbb E_{\mathbb Q}\left[\log\frac{d\mathbb Q}{d\mathbb
    P_0} \right]<\infty$. Said proposition also yields that the unique
solution $\hat{\QQ}$ to the dual problem allows us to construct the
solution $\hat{\phi}$ to the primal problem considered under $\P_0$ as
\begin{align}
  \label{eq:29}
  \hat{\phi}_t = \frac{1}{\Lambda}\E_0[S_T-S_t\;|\;\cG^\tau_t], \quad
  t \in [0,T].
\end{align}

Following the path outlined in the
special case treated in~\cite{BDR:22}, we need to rewrite the dual
target functional~\eqref{eq:dualproblem}. For this it will be convenient to
introduce the functionals 
$\Psi_1:(L^2[0,\tau^{-1}(T)],dt)\times (L^2[0,T],dt)\rightarrow\mathbb R$ and 
$\Psi_2,\Psi_3:L^2\left([0,\tau^{-1}(T)]^2, dt\otimes ds\right)\times L^2\left([0,T]^2, dt\otimes ds\right)\rightarrow\mathbb R$ given by
  \begin{align}
  \Psi_1(a, \tilde a)&:=\frac{1}{2}\int_{0}^{\tau^{-1}(T)}a^2_t dt+\frac{1}{2}\int_{0}^T \tilde a^2_t dt\label{opt1}\\
  &\quad+\Phi_0\left(\gamma
\left(w({\tau(0)})+\int_{0}^{\tau^{-1}(T)}\sqrt{\dot{\tau}(t)}a_t dt\right)
+\overline{\gamma}\int_{0}^T \tilde a_t dt \right)\nonumber\\
&\quad+\frac{\rho}{2}\int_{0}^T\left(\gamma\left( w({\tau(0)})-w({t\wedge\tau(0)})+
\int_{\tau^{-1}(t)}^{\tau^{-1}(T)}\sqrt{\dot{\tau}(s)}a_s ds\right)+\overline{\gamma} \int_{t}^T\tilde a_s ds\right)^2 dt,\\
  \Psi_2(l,\tilde l)&:=\frac{1}{2}\int_{0}^{\tau^{-1}(T)}\int_{s}^{\tau^{-1}(T)}
  l^2_{t,s} dt\, ds+\frac{1}{2}\int_{0}^{\tau^{-1}(T)}\int_{s}^T \tilde
                      l^2_{t,s}dt\, ds\label{opt2}\\
 &\quad +\frac{\rho}{2}\int_{0}^{\tau^{-1}(T)}\int_{s}^T
\left(\gamma \int_{s\vee \tau^{-1}(t)}^{\tau^{-1}(T)}\sqrt{\dot{\tau}(u)} l_{u,s}du+\overline{\gamma}\int_{t}^T \tilde l_{u,s} du+
\gamma\sqrt{\dot{\tau}(s)}\mathbb I_{s\geq\tau^{-1}(t)}\right)^2
dt\, ds\nonumber
\intertext{and}\\
  \Psi_3(m,\tilde m)&:=\frac{1}{2}\int_{0}^{\tau^{-1}(T)}\int_{s}^{\tau^{-1}(T)}
  m^2_{t,s} dt\, ds+\frac{1}{2}\int_{0}^T\int_{s}^T \tilde m^2_{t,s}dt ds\label{opt3}\\
 &+\frac{\rho}{2}\int_{0}^T\int_{s}^T
\left(\gamma\mathbb I_{s\leq\tau^{-1}(T)} \int_{s\vee \tau^{-1}(t)}^{\tau^{-1}(T)}\sqrt{\dot{\tau}(u)} m_{u,s}du+\overline{\gamma}\int_{t}^T \tilde m_{u,s} du\right)^2\nonumber
dt\,ds.
\end{align}
\begin{Lemma}\label{lem:2}
  The dual infimum in~\eqref{eq:dualproblem} coincides with the one taken over
  all $\mathbb Q \in \mathcal Q$ whose densities take the form
  \begin{align}\label{eq:standarddensity}
\left.\frac{d\mathbb Q}{d\mathbb P_0}\right|_{\cG^\tau_t} =
    \cE\left(\int_0^{. \wedge \tau^{-1}(T)} \theta_s
    dB_s+\int_0^. \tilde\theta_s dW_s\right)_t, \quad t \in [0,T],
  \end{align}
  for some bounded, $(\cG^\tau_t)$-adapted $\theta$ and $\tilde\theta$
  changing values only at finitely many deterministic times. For such
  $\mathbb Q$, we have the Girsanov $(\cG^\tau,\QQ)$-Brownian motions
      \begin{align}
      \label{eq:22}
      B^{\mathbb Q}_t &:= B_t - \int_0^t \theta_s ds, \quad t \in [0,\tau^{-1}(T)],\\
      W^{\mathbb Q}_t &:= W_t - \int_0^t \tilde\theta_s ds, \quad t \in [0,T],
    \end{align}
  which allow for Ito representations 
    \begin{align}
      \label{eq:21}
      \theta_t &= a_t + \int_0^{t} l_{t,s}
                 dB^{\mathbb Q}_s+\int_0^t m_{t,s} dW^{\mathbb Q}_s, \quad t \in
                 [0,\tau^{-1}(T)],\\
      \tilde\theta_t &= \tilde{a}_t + \int_0^{t} \tilde l_{t,s}
      dB^{\mathbb Q}_s+\int_0^t \tilde m_{t,s} dW^{\mathbb Q}_s, \quad
                       t \in [0,T],
    \end{align}
    for suitable $a_t, \tilde{a}_t \in \RR$ and suitable
    $\mathcal G^{\tau}$-predictable processes $l_{t,.}$, $m_{t,.}$,
    $\tilde l_{t,.}$, $\tilde m_{t,.}$. In terms of these,
     the dual target value associated with $\QQ$ is
   \begin{align}\label{representation}
   &\mathbb E_{\mathbb Q}\left[\Phi_0(S_T-S_0)+\log \frac{d\mathbb Q}{d\mathbb P_0}+\frac{\rho}{2}\int_{0}^T\left|
\mathbb E_{\mathbb Q}\left[S_T\middle|\mathcal
  G_t\right]-S_t\right|^2 dt\right]\nonumber\\
  &=\mathbb E_{\mathbb Q}\left[\Psi_1(a,\tilde a)+\Psi_2(l, \tilde l)+\Psi_3(m, \tilde m)\right]
  \end{align}
\end{Lemma}
\begin{proof}
For any $\mathbb Q \in \mathcal Q$ the martingale representation
  property of Brownian motion gives us a predictable $\theta,\tilde\theta$ with
  $$\mathbb E_{\mathbb Q}\left[\log\left(\frac{d\mathbb Q}{d\mathbb P}\right) \right]=\frac{1}{2}\mathbb
  E_{\mathbb
    Q}\left[\int_0^{\tau^{-1}(T)}\theta^2_tdt+\int_0^{T}\tilde\theta^2_tdt\right]<\infty$$
  such that~\eqref{eq:standarddensity} holds. Using this density to
  rewrite the dual target functional as an expectation under
  $\mathbb P$, we can use standard density arguments to see that the
  infimum over $\mathbb Q \in \mathcal Q$ can be realized by
  considering the $\mathbb Q$ induced by simple $\theta,\hat\theta$ as
  described in the lemma's formulation
  via~\eqref{eq:standarddensity}. As a consequence, the It\^o
  representations of $\theta_t,\hat\theta_t$ can be chosen in such a
  way that the resulting
  $a_t,\tilde a_t, l_{t,.}\tilde l_{t,.},m_{t,.},\tilde m_{t,.}$ are
  also measurable in $t$; in fact they only change when
  $\theta,\hat\theta$ change their values, i.e., at finitely many
  deterministic times.  This measurability property will be used below
  for applying Fubini’s theorem.

Next, we compute the value of the dual problem in terms of $a,\tilde a,l,\tilde l,m,\tilde m$.
Recalling the dynamics for $S$ in \eqref{eq:19}, we get
\begin{equation}\label{1}
\mathbb E_{\mathbb Q}\left[\Phi_0(S_T-S_0)\right]=\Phi_0\left(\gamma
\left(w(\tau(0))+\int_{0}^{\tau^{-1}(T)}\sqrt{\dot{\tau}(t)}a_t dt\right)
+\overline{\gamma}\int_{0}^T \tilde a_t dt \right).
\end{equation}
From It\^o's isometry and Fubini's theorem we obtain
\begin{align}\label{2}
\mathbb E_{\mathbb Q}\left[\log \frac{d\mathbb Q}{d\mathbb P}\right]&=
\frac{1}{2}\int_{0}^{\tau^{-1}(T)}a^2_t dt+\frac{1}{2}\int_{0}^T \tilde a^2_t dt\\
&\quad+\frac{1}{2}\int_{0}^{\tau^{-1}(T)}\int_{s}^{\tau^{-1}(T)}
  l^2_{t,s} dt ds
  +\frac{1}{2}\int_{0}^{\tau^{-1}(T)}\int_{s}^T \tilde l^2_{t,s}dtds
  \nonumber\\
  &\quad+\frac{1}{2}\int_{0}^{\tau^{-1}(T)}\int_{s}^{\tau^{-1}(T)}
  m^2_{t,s} dt\, ds+\frac{1}{2}\int_{0}^T\int_{s}^T \tilde m^2_{t,s}dt\,ds.\nonumber
\end{align}
Next, from the Fubini Theorem
\begin{align*}
&\mathbb E_{\mathbb Q}\left[S_T-S_t|\mathcal G^\tau_t\right]=\gamma\int_{\tau^{-1}(t)}^{t\wedge \tau^{-1}(T)}\sqrt{\dot{\tau}(s)}dB^{\mathbb Q}_s\\
&+\gamma\left( w({\tau(0)})-w({t\wedge\tau(0)})+
\int_{\tau^{-1}(t)}^{\tau^{-1}(T)}\sqrt{\dot{\tau}(s)}a_s ds\right)+\overline{\gamma} \int_{t}^T\tilde a_s ds\\
&+\gamma\left(\int_{0}^{t\wedge\tau^{-1}(T)}\int_{s\vee \tau^{-1}(t)}^{\tau^{-1}(T)}
\sqrt{\dot{\tau}(u)} l_{u,s} du\, dB^{\mathbb Q}_s+
\int_{0}^{t\wedge\tau^{-1}(T)}\int_{s\vee \tau^{-1}(t)}^{\tau^{-1}(T)}  \sqrt{\dot{\tau}(u)} m_{u,s} du \,d\hat{B}^{\mathbb Q}_s\right)\\
&+\overline{\gamma}\left(\int_{0}^{t\wedge\tau^{-1}(T)}\int_{t}^T \tilde l_{u,s} du\, dB^{\mathbb Q}_s+\int_{0}^t\int_{t}^T \tilde m_{u,s} du\, d\tilde {B}^{\mathbb Q}_s\right).
\end{align*}
This together with the It\^o's isometry and \eqref{1}--\eqref{2} gives
\eqref{representation}.
\end{proof}

\subsection{Solving the deterministic variational problems}

Lemma~\ref{lem:2} above suggests to consider the minimization of the
functionals $\Psi_1$, $\Psi_2$, and $\Psi_3$ specified there. This amounts to solving deterministic variational
problems and the following lemmas sum up the main findings.

    \begin{Lemma}\label{lem:Gamma2}
      The functional $\Psi_2=\Psi_2(l,\tilde{l})$ defined for
      square-integrable $l$, $\tilde{l}$ by \eqref{opt2} has the minimum
      value
   \begin{align}\label{40}
    \Psi_2(\underline{l},\underline{\tilde{l}})&=\frac{1}{2}\int_{\tau(0)}^T\!
\frac{\gamma^2\sqrt{\rho}}
{\overline{\gamma}\coth\left(\overline{\gamma}\sqrt{\rho}\left(t-\tau^{-1}(t)\right)\right)+
                     \tanh \left(\sqrt{\rho}\left(T-t\right)\right)}dt
     % \\&=
%     \frac{1}{2}\int_{0}^T
% \frac{\gamma^2\sqrt{\rho} }
% {\overline{\gamma}\coth\left(\overline{\gamma}\sqrt{\rho}\left(\tau(t)-t\right)\right)+
% \tanh\left(\sqrt{\rho}\left(T-\tau(t)\right)\right)}d\tau(t)
   \end{align}
 for some $(\underline{l},\underline{\tilde{l}})$.%\todo{Leave out separability?} where $\underline{l}$ is of the separable form $\underline{l}_{t,s}=\ell_1(t)/\ell_2(s)$ for suitable continuous functions $\ell_1,\ell_2:[0,\tau^{-1}(T)]\to \RR$.
        \end{Lemma}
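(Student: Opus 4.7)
The functional $\Psi_2$ decouples completely across $s\in[0,\tau^{-1}(T)]$, since each of its three terms involves only $l_{\cdot,s}$ and $\tilde l_{\cdot,s}$. For each fixed $s$ one is left with a strictly convex quadratic minimization in these two functions, to which I would apply a first-order (Gâteaux) calculus. Denoting by $R(t)$ the expression inside the parenthesised square of~\eqref{opt2} and setting $I(t):=\int_s^t R(r)\,dr$, a direct computation (differentiating the squared penalty and applying Fubini, using $\mathbb{I}_{s\ge\tau^{-1}(t)}=\mathbb{I}_{t\le\tau(s)}$ and $s\vee\tau^{-1}(t)\in\{s,\tau^{-1}(t)\}$ on the two regimes) yields the Euler--Lagrange equations
\begin{align}
l_{u,s}=-\rho\gamma\sqrt{\dot\tau(u)}\,I(\tau(u)),\qquad
\tilde l_{u,s}=-\rho\overline\gamma\,I(u).
\end{align}

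Substituting these back into $R(t)$ and differentiating in $t$ converts the problem into a two-regime linear second-order ODE for $I$:
\begin{align}
I''=\rho\overline\gamma^{2}\,I\ \text{on }[s,\tau(s)],\qquad I''=\rho\,I\ \text{on }[\tau(s),T],
\end{align}
together with $I(s)=0$, $I'(T)=0$, continuity of $I$ at $\tau(s)$, and a jump $I'(\tau(s)+)-I'(\tau(s)-)=-\gamma\sqrt{\dot\tau(s)}$ coming from the deterministic indicator term in~$R$. Each piece admits an explicit $\sinh/\cosh$ solution; the two boundary conditions together with the two matching conditions at $\tau(s)$ fix all four constants uniquely and yield a minimizer $(\underline l_{\cdot,s},\underline{\tilde l}_{\cdot,s})$ in closed form, in particular
\begin{align}
I(\tau(s))=\frac{\gamma\sqrt{\dot\tau(s)}}{\sqrt\rho\bigl[\overline\gamma\coth(\overline\gamma\sqrt\rho(\tau(s)-s))+\tanh(\sqrt\rho(T-\tau(s)))\bigr]}.
\end{align}

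The value is then obtained by integration by parts. Substituting the minimizers, one rewrites
\begin{align}
\Psi_2^{(s)}=\tfrac{\rho^{2}\overline\gamma^{2}}{2}\!\int_s^{\tau(s)}\!I^{2}\,dt+\tfrac{\rho^{2}}{2}\!\int_{\tau(s)}^{T}\!I^{2}\,dt+\tfrac{\rho}{2}\!\int_s^{T}\!(I')^{2}\,dt,
\end{align}
replaces $\rho\overline\gamma^{2}I$ and $\rho I$ by $I''$ on the respective pieces, and integrates by parts on each piece separately. The $(I')^{2}$ contributions cancel and, thanks to $I(s)=0$ and $I'(T)=0$, the only surviving boundary term is produced by the jump at $\tau(s)$, so that
\begin{align}
\Psi_2^{(s)}=\tfrac{\rho}{2}\,\gamma\sqrt{\dot\tau(s)}\,I(\tau(s)).
\end{align}
Inserting the closed-form expression for $I(\tau(s))$ above and changing variables $t=\tau(s)$, $dt=\dot\tau(s)\,ds$ in $\int_0^{\tau^{-1}(T)}\Psi_2^{(s)}\,ds$ reproduces~\eqref{40} exactly.

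The main delicacy is the careful treatment of the jump of $I'$ at $t=\tau(s)$: it must be identified from the deterministic indicator term in $R$, it is what couples the $\sinh$ branch on $[s,\tau(s)]$ to the $\cosh$ branch on $[\tau(s),T]$ in the matching, and it is precisely this jump that leaves the unique non-vanishing boundary term in the value calculation. Once the matching is organized, the rest is a routine second-order linear ODE computation.
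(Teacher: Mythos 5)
Your proof is correct, and it reaches \eqref{40} by a genuinely different computational route than the paper, even though the first step --- decoupling $\Psi_2$ over $s$ and recognizing a strictly convex quadratic problem for each fixed $s$ --- is the same. The paper substitutes the antiderivatives $g_s(t)=\int_{\tau^{-1}(t)}^{\tau^{-1}(T)}\sqrt{\dot\tau(u)}\,l_{u,s}\,du$ and $\tilde g_s(t)=\int_t^T\tilde l_{u,s}\,du$, solves the Euler--Lagrange ODEs for these on the two regimes (using the observation that the optimal pair is proportional to $(\gamma,\overline\gamma)$ on $[\tau(s),T]$), and is then left with an explicit positive definite quadratic in two free scalars $(u_s,z_s)$ whose minimum it computes by hand. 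You instead write the stationarity conditions for $l_{\cdot,s},\tilde l_{\cdot,s}$ themselves, package them into a single two-regime ODE $I''=\rho\overline\gamma^2 I$ / $I''=\rho I$ for the integrated residual $I$, with $I(s)=0$, $I'(T)=0$ and the derivative jump $-\gamma\sqrt{\dot\tau(s)}$ at $\tau(s)$ coming from the indicator, and then extract the value by integration by parts, where everything cancels except the boundary term $\frac{\rho}{2}\gamma\sqrt{\dot\tau(s)}\,I(\tau(s))$ produced by that jump. I checked the first-order conditions, the matching at $\tau(s)$, the closed form for $I(\tau(s))$, and the integration-by-parts identity; all are right, and the final substitution $t=\tau(s)$ gives \eqref{40}. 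Your route avoids the paper's two-parameter quadratic minimization entirely and makes the appearance of $\overline\gamma\coth(\overline\gamma\sqrt\rho(\tau(s)-s))+\tanh(\sqrt\rho(T-\tau(s)))$ transparent as the coefficient in the jump condition; the paper's route has the advantage of exhibiting the minimizer directly in the integrated form that is reused later when constructing the dual optimizer. The only point you gloss over is why the first-order conditions characterize the minimizer: this is immediate here because the functional is a coercive, strictly convex quadratic on a Hilbert space (the paper invokes a Komlos argument for existence), so one line to that effect would complete the argument.
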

\begin{proof}
For a given $(l,\tilde l)\in L^2\left([0,\tau^{-1}(T)]^2, dt\otimes ds\right)\times L^2\left([0,T]^2, dt\otimes ds\right)$
and $s<\tau^{-1}(T)$ introduce the functions
 \begin{align}
&g_s(t):=\int_{\tau^{-1}(t)}^{\tau^{-1}(T)}\sqrt{\dot{\tau}(u)} l_{u,s}du, \ \ t\in [\tau(s),T]\label{func1}\\
&\tilde g_s(t):=\int_{t}^T  \tilde l_{u,s} du, \ \ t\in [s,T].\label{func2}
\end{align}
Clearly, almost surely the functions $g_s,\tilde g_s$ are absolutely
continuous in $t$ for almost every $s \in [0,\tau^{-1}(T)]$ with
boundary values
 \begin{align}\label{boundary}
   g_s(T)=0,
   \quad \tilde g_s(T)=0.
 \end{align}
Changing variables we note
$$
\int_{s}^{\tau^{-1}(T)} l^2_{t,s}dt=\int_{\tau(s)}^T \dot{g}^2_s(t)dt
$$
where $\dot{g}_s$ denotes the derivative with respect to $t$
and so
$$\Psi_2(l,\tilde l)=\int_{0}^{\tau^{-1}(T)} \psi_s(g_s,\tilde{g}_s) ds$$
for
\begin{align}\label{5}
 \psi_s(g,\tilde{g})&:=\frac{1}{2}\left(\int_{\tau(s)}^T \left(\dot{g}^2(t)+\dot{\tilde g}^2(t)\right)dt\right)+
 \frac{1}{2}\int_{s}^{\tau(s)} \dot{\tilde g}^2(t)dt\\
 &\quad +\frac{\rho}{2}\int_{\tau(s)}^T
\left(\gamma g(t)+\overline{\gamma}\tilde g(t)
\right)^2dt\nonumber\\
&\quad+\frac{\rho}{2}\int_{s}^{\tau(s)}\left(\gamma\sqrt{\dot{\tau}(s)}+\gamma g\left(\tau(s)\right)+\overline{\gamma}\tilde g(t)\right)^2 dt.\nonumber
\end{align}

Minimization of $\Psi_2$ can thus be carried out separately for each
$s \in [0,\tau^{-1}(T)]$. So we fix such an $s$ and seek to determine
the absolutely continuous $(\underline{g}_s,\underline{\tilde{g}}_s)$
which minimizes $\psi_s$ under the boundary condition
\eqref{boundary}. 
%At the end we argue that the functions we found are correspond to some
%$(\alpha,\hat\alpha)\in L^2\left([0,\tau^{-1}(T)]^2, dt\otimes ds\right)\times L^2\left([0,T]^2, dt\otimes ds\right)$
%and $\alpha$ is separable.
To this end, observe fist that the functional $\psi_s$
is strictly convex and so existence of a minimizer
follows by a
standard Komlos-argument.

In the computation of the minimum value, let us, for ease of notation, put $g:=g_s$,
$\tilde{g}:=\tilde{g}_s$. We first focus on the contributions to
$\psi_s(g,\tilde{g})$ from its
integrals over the interval $[\tau(s),T]$, assuming $\tau(s)<T$ of
course; for $\tau(s)=T$ we can directly proceed with the minimization
over $[s,\tau(s)]$ as carried out below with $z_s:=0$.
Clearly if $\gamma x+\overline{\gamma} y$ is given then the minimum of $x^2+y^2$ is obtained for
$x,y$ which satisfy
$\frac{y}{x}=\frac{\overline{\gamma}}{\gamma}$.  Hence, the optimal solution will satisfy
$$g(t)=\gamma f_s(t), \ \ \tilde g(t)=\overline{\gamma} f_s(t), \ \ t\in [\tau(s),T]$$
for some function $f_s$ which satisfies $f_s(T)=0$.
Hence the functional to be minimized is really
\begin{align*}
&\frac{1}{2}\int_{\tau(s)}^T \left(\dot{g}_s^2(t)+\dot{\tilde g}_s^2(t)\right)dt+
\frac{\rho}{2}\int_{\tau(s)}^T
\left(\gamma g_s(t)+\overline{\gamma}\tilde g_s(t)
\right)^2
dt\\
&=\frac{1}{2}\int_{\tau(s)}^T \dot{f}^2_s(t)+\frac{\rho}{2}\int_{\tau(s)}^T
f^2_s(t)dt.
\end{align*}
Its Euler–Lagrange reads $\ddot{f_s}=\rho f_s$; c.f., e.g., Section~1 in \cite{GF:63}.
Since $f_s(T)=0$, we conclude that
\begin{align}
\underline{g}_s(t)&=\gamma z_s\frac{\sinh \left(\sqrt{\rho}(T-t)\right)}
{\sinh \left(\sqrt{\rho}\left(T-\tau(s)\right)\right)},
\label{A}\\
%\intertext{and} \ \
\underline{\tilde g}_s(t)&=\overline{\gamma} z_s\frac{\sinh \left(\sqrt{\rho}(T-t)\right)}
{\sinh \left(\sqrt{\rho}\left(T-\tau(s)\right)\right)}, \quad  t\in [\tau(s),T],
\label{B}
\end{align}
for some $z_s \in \RR$ which is yet to be determined optimally.

Next, we treat the interval $[s,\tau(s)]$.
The Euler–Lagrange equation for
minimizing the functional
$$\int_{s}^{\tau(s)}\dot {\tilde g}_s^2(t)dt+\frac{\rho}{2}\int_{s}^{\tau(s)}
\left(\gamma\sqrt{\dot{\tau}(s)}+\gamma^2 z_s+\overline{\gamma}\tilde g_s(t)\right)^2
dt$$
subject to the boundary condition
$\tilde g_s(\tau(s))=\overline{\gamma} z_s$ (which we infer
from~\eqref{B}) is
$$\ddot {\tilde g}_s=\rho\overline{\gamma} \left(\gamma \sqrt{\dot{\tau}(s)}+\gamma  z_s+\overline{\gamma}\tilde g_s\right).$$
Hence,
\begin{align}\label{C}
\underline{\tilde g}_s(t)+&\frac{\gamma}{\overline{\gamma}}\left(\sqrt{\dot{\tau}(s)}+ z_s\right)=
u_s\sinh \left(\overline{\gamma}\sqrt{\rho}\left(\tau(s)-t\right)\right)\\
&+
\frac{1}{\overline{\gamma}}\left(\gamma\sqrt{\dot{\tau}(s)}+ z_s\right)
\cosh \left(\overline{\gamma}\sqrt{\rho}\left(\tau(s)-t\right)\right)
, \quad
t\in [s,\tau(s)],
\end{align}
for some $u_s \in \RR$ which we still need to choose optimally.

We plug \eqref{A}--\eqref{C} into \eqref{5} to obtain 
\begin{align*}
\psi_s(\underline{g}_s,\underline{\tilde{g}}_s)&=\frac{1}{2}\sqrt{\rho}\coth\left(\sqrt{\rho}
\left(T-\tau(s)\right)\right)z_s^2\\
 &\quad +\frac{1}{2\overline{\gamma}}\sqrt{\rho}
 \sinh \left(\overline{\gamma}\sqrt{\rho}\left(\tau(s)-s\right)\right)\cosh \left(\overline{\gamma}\sqrt{\rho}\left(\tau(s)-s\right)\right)\\&\quad \quad \quad\cdot\left(\overline{\gamma}^2u_s^2+\left(\gamma\sqrt{\dot{\tau}(s)}+z_s\right)^2\right)
\\
 &\quad+\rho\sinh^2 \left(\overline{\gamma}\sqrt{\rho}\left(\tau(s)-s\right)\right)
 \left(\gamma\sqrt{\dot{\tau}(s)}+z_s\right)u_s\bigg)
\end{align*}
Simple computations show that, for $\tau(s)<T$, the minimal value of
the above positive definite quadratic pattern in $(u_s,z_s) \in \RR^2$ is equal to
\begin{align*}
{}&\frac{1}{2}\gamma^2\sqrt{\rho}
\frac{
\coth\left(\sqrt{\rho}\left(T-\tau(s)\right)\right)
\tanh\left(\overline{\gamma}\sqrt{\rho}\left(\tau(s)-s\right)\right)}
{\overline{\gamma}\coth\left(\sqrt{\rho}\left(T-\tau(s)\right)\right)+
                 \tanh\left(\overline{\gamma}\sqrt{\rho}\left(\tau(s)-s\right)\right)}\dot{\tau}(s)\\
&=  \frac{1}{2}
\frac{\gamma^2\sqrt{\rho}}
{\overline{\gamma}\coth\left(\overline{\gamma}\sqrt{\rho}\left(\tau(s)-s\right)\right)+
\tanh\left(\sqrt{\rho}\left(T-\tau(s)\right)\right)}\dot{\tau}(s)
\end{align*}
and is attained for some constants $z_s$ and $u_s$ which depend
continuously on~$s$. For $\tau(s)=T$, we have $z_s=0$ and the
minimization over $u_s\in \RR$ leads again to this last expression
which thus holds for any $s$.
Upon integration over $s$, we thus find
\begin{align*}
  \Psi_2(\underline{l},\underline{\tilde{l}})=
&\frac{1}{2}
\int_{0}^T \frac{\gamma^2\sqrt{\rho}}
{\overline{\gamma}\coth\left(\overline{\gamma}\sqrt{\rho}\left(\tau(s)-s\right)\right)+
\tanh\left(\sqrt{\rho}\left(T-\tau(s)\right)\right)}\dot{\tau}(s)
 ds
\end{align*}
and \eqref{40} follows by the substitution $t=\tau(s)$.

%Finally, the optimal solution $(\underline{l},\underline{\tilde{l}})$ can be recovered from \eqref{func1}--\eqref{func2} and \eqref{A}--\eqref{C}. In particular from \eqref{A} it follows that $\underline l$ is of the separable form claimed in our theorem.\todo{This can be left out, right?}
\end{proof}

\begin{Lemma}\label{lem:Gamma1}
 The functional $\Psi_1=\Psi_1(a,\tilde{a})$ given for
 square-integrable $(a,\tilde{a})$ by~\eqref{opt1} is minimized by
 $(\underline{a},\tilde{\underline{a}}) :=\Psi\left(\Phi_0,(w(t)-w(0))_{t
   \in [0,\tau(0)]}\right)$ for some
continuous functional $\Psi: \RR\times C([0,\tau(0)])\rightarrow
L^2\left([0,\tau^{-1}(T)], dt\right)\times L^2\left([0,T], dt\right)$.
This minimizer $(\underline{a},\tilde{\underline{a}})$ satisfies
\begin{align}\label{26}
\gamma &w(\tau(0))+\gamma \int_{0}^{\tau^{-1}(T)}\sqrt{\dot{\tau}(t)} \underline{a}_t dt+\overline{\gamma}\int_{0}^T \tilde{\underline{a}}_t dt\\
&=\bar{S}_0-S_0-
\frac{\rho\Upsilon'_0(\tau(0))}{\Upsilon_0(\tau(0))}\Phi_0
\end{align}
where $\bar S$ and $\Upsilon$ are given in Theorem
\ref{thm:1}.

Moreover,
in the special case $\tau(0)=0$ the minimum value is
\begin{align}\label{27}
\Psi_1(\underline{a},\tilde{\underline{a}})=-\frac{1}{2 \sqrt{\rho}}\tanh\left(\sqrt\rho T\right)\Phi^2_0.
\end{align}
\end{Lemma}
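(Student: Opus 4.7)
The plan is to follow the variational analysis used in the proof of Lemma~\ref{lem:Gamma2}, with the added twist that the linear term $\Phi_0(\gamma w(\tau(0))+\gamma G(\tau(0))+\bar\gamma\tilde G(0))$ produces natural boundary conditions at the left endpoints. Since $\Psi_1$ is strictly convex and coercive on $L^2([0,\tau^{-1}(T)])\times L^2([0,T])$, existence and uniqueness of a minimizer $(\underline{a},\underline{\tilde{a}})$ follow from the direct method. The continuity of the map $\Psi$ in $(\Phi_0,w)$ will then be read off from the closed form below, as the minimizer solves a linear boundary value problem whose coefficients depend continuously on $\tau$ and whose data depend linearly on $(\Phi_0,w)$.

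For the explicit computation I mirror the substitution used for $\Psi_2$ and introduce
\[
G(t):=\int_{\tau^{-1}(t)}^{\tau^{-1}(T)}\sqrt{\dot\tau(s)}\,a_s\,ds,\ t\in[\tau(0),T],\qquad\tilde G(t):=\int_t^T \tilde a_s\,ds,\ t\in[0,T],
\]
with $G(T)=\tilde G(T)=0$; after a change of variables the quadratic part of $\Psi_1$ becomes $\tfrac12\int_{\tau(0)}^T\dot G^2\,dt+\tfrac12\int_0^T\dot{\tilde G}^2\,dt$. The Euler--Lagrange system on $[\tau(0),T]$ reads $\ddot G=\rho\gamma(\gamma G+\bar\gamma\tilde G)$, $\ddot{\tilde G}=\rho\bar\gamma(\gamma G+\bar\gamma\tilde G)$, while on $[0,\tau(0)]$ only $\tilde G$ varies and solves a second order linear ODE with forcing built from $w(\tau(0))-w(\cdot)$ and the constant value $G(\tau(0))$. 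Variation produces the natural boundary condition $\dot{\tilde G}(0)=\Phi_0\bar\gamma$ and a boundary condition for $G$ at $\tau(0)$ which, upon using the ODE for $\tilde G$ on $[0,\tau(0)]$ together with this Neumann condition, simplifies to $\bar\gamma\dot G(\tau(0))=\gamma\dot{\tilde G}(\tau(0))$. On $[\tau(0),T]$ the orthogonal combination $K:=\bar\gamma G-\gamma\tilde G$ is then affine with $K(T)=0$ and $\dot K(\tau(0))=0$, hence $K\equiv 0$, so $G=\gamma H$ and $\tilde G=\bar\gamma H$ with $H$ solving $\ddot H=\rho H$, $H(T)=0$; thus $H(t)=C\sinh(\sqrt\rho(T-t))$. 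On $[0,\tau(0)]$ variation of parameters against $\cosh(\bar\gamma\sqrt\rho\,\cdot)$ and $\sinh(\bar\gamma\sqrt\rho\,\cdot)$ gives $\tilde G$ as an explicit linear functional of $w$, and matching the value and the derivative of $\tilde G$ at $\tau(0)$ together with the Neumann condition at $0$ fixes the three remaining constants.

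In the special case $\tau(0)=0$ the interval $[0,\tau(0)]$ is empty, $K\equiv 0$ from the start, and the boundary condition reduces to $\dot H(0)=\Phi_0$, so that $C=-\Phi_0/(\sqrt\rho\cosh(\sqrt\rho T))$. Using $\cosh^2+\sinh^2=\cosh(2\cdot)$ and $\int_0^T\cosh(2\sqrt\rho(T-t))\,dt=\sinh(2\sqrt\rho T)/(2\sqrt\rho)$, one finds $\tfrac12\int_0^T(\dot H^2+\rho H^2)\,dt=\Phi_0^2\tanh(\sqrt\rho T)/(2\sqrt\rho)$ and $\Phi_0 H(0)=-\Phi_0^2\tanh(\sqrt\rho T)/\sqrt\rho$, which sum to~\eqref{27}.

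The hardest step is the identity~\eqref{26}. Its left hand side equals $\gamma w(\tau(0))+\gamma G(\tau(0))+\bar\gamma\tilde G(0)$, which at the optimum is nothing but $\mathbb E_{\hat{\mathbb Q}}[S_T]$. Substituting $G(\tau(0))=\gamma C\sinh(\sqrt\rho(T-\tau(0)))$ and the closed form for $\tilde G(0)$ obtained from the $[0,\tau(0)]$-ODE yields an affine functional of $(\Phi_0,w)$ whose coefficients combine $\cosh,\sinh,\tanh$ with arguments $\sqrt\rho(T-\tau(0))$ and $\bar\gamma\sqrt\rho\tau(0)$. The main obstacle is to regroup these coefficients exactly in the form of the weights $\Upsilon_0(h)$ of~\eqref{eq:11}, so that the $w$-dependent piece is recognized as the weighted average $\bar S_0$ of the proxies $\hat S_{0,h}$ and the coefficient of $\Phi_0$ collapses to $-\rho\Upsilon'_0(\tau(0))/\Upsilon_0(\tau(0))$.
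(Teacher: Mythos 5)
Your variational setup is sound and parallels the paper's: the same substitution $G,\tilde G$ (the paper's $h,\tilde h$), strict convexity for existence/uniqueness, and the reduction on $[\tau(0),T]$ to a single hyperbolic profile $G=\gamma H$, $\tilde G=\bar\gamma H$ with $\ddot H=\rho H$, $H(T)=0$. Where you differ is in how you get there: the paper first uses the pointwise observation that $x^2+y^2$ is minimized at $y/x=\bar\gamma/\gamma$ for fixed $\gamma x+\bar\gamma y$, then handles $[0,\tau(0)]$ by citing Theorem~3.2 of \cite{BSV:17} and finally optimizes over the two scalars $u=\tilde h(0)$ and $v=H(\tau(0))$; you instead derive the full Euler--Lagrange system with natural boundary conditions $\dot{\tilde G}(0)=\Phi_0\bar\gamma$ and (after using the ODE on $[0,\tau(0)]$) $\bar\gamma\dot G(\tau(0))=\gamma\dot{\tilde G}(\tau(0))$, which kills the affine combination $K=\bar\gamma G-\gamma\tilde G$. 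This is more self-contained (no external reference needed) and I checked it is consistent with the paper's formulas: e.g.\ the paper's optimal $u$ in \eqref{8} together with \eqref{ODE} indeed yields $\dot{\tilde h}(0)=\Phi_0\bar\gamma$. Your computation of \eqref{27} for $\tau(0)=0$ is complete and correct.

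The one genuine shortfall is the identity \eqref{26}, which is half of the lemma's content. You correctly identify its left-hand side as $\gamma w(\tau(0))+\gamma G(\tau(0))+\bar\gamma\tilde G(0)$ and correctly describe what must happen (the $w$-dependent part must reassemble into the $\Upsilon_0$-weighted average defining $\bar S_0$, the $\Phi_0$-coefficient must collapse to $-\Lambda\Upsilon'_0(\tau(0))/\Upsilon_0(\tau(0))$), but you stop at naming this as ``the main obstacle'' rather than carrying it out. The paper does exactly this bookkeeping via its explicit expressions \eqref{7+++}, \eqref{8}, \eqref{4.21}, \eqref{29} and a Fubini interchange to recognize the double integral over $s$ as $\int_0^{\tau(0)}\hat S_{0,h}\,\Upsilon'_0(h)\,dh$. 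Until that regrouping is done, \eqref{26} is asserted, not proved. (Incidentally, the lemma's displayed coefficient $\rho\Upsilon'_0/\Upsilon_0$ is inconsistent with the paper's own proof and with the primal construction, which give $\Lambda\Upsilon'_0/\Upsilon_0=\Upsilon'_0/(\rho\Upsilon_0)$; your target should be the latter.)
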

\begin{proof}
Similar arguments as in Lemma \ref{lem:Gamma2} give the uniqueness of the minimizer.

For a given $a,\tilde a\in L^2\left([0,\tau^{-1}(T)], dt\right)\times L^2\left([0,T], dt\right)$
introduce the functions
\begin{align}\label{17}
h(t)&:=\int_{\tau^{-1}(t)}^{\tau^{-1}(T)}\sqrt{\dot{\tau}(s)} a_s ds, \quad  t\in [\tau(0),T],\\
 \tilde h(t)&:=\int_{t}^T  \tilde a_s ds, \quad t\in [0,T].\label{18}
\end{align}
Clearly, $h,\tilde h$ are absolutely continuous and satisfy
 \begin{align}\label{boundary1}
 h(T)=\tilde h(T)=0.
 \end{align} 

From a change of variables we obtain
$$
\int_{0}^{\tau^{-1}(T)} a^2_{t}dt=\int_{\tau(0)}^T \dot{h}^2(t)dt
$$
and so $\Psi_1(a,\tilde a)=\psi_1(h,\tilde h)$ for
\begin{align}\label{7}
 \psi_1 (h,\tilde h)&:=\Phi_0\left(\gamma
w({\tau(0)})+\gamma h\left(\tau(0)\right)+\overline{\gamma}\tilde h(0) \right)\\
&\quad+\frac{1}{2}\int_{\tau(0)}^T \left(\dot{h}^2(t)+
\dot{\tilde h}^2(t)\right)dt+\frac{1}{2}\int_{0}^{\tau(0)} \dot {\tilde h}^2(t)dt\\
 &\quad+\frac{\rho}{2}\int_{0}^{\tau(0)}\left(\gamma\left(w({\tau(0)})-w(t)\right)+
\gamma h\left(\tau(0)\right)+\overline{\gamma} \tilde h(t)\right)^2 dt\\
&\quad+\frac{\rho}{2}\int_{\tau(0)}^{T}\left(\gamma h(t)+\overline{\gamma} \tilde h(t)\right)^2 dt.
\end{align}
We aim to find absolutely continuous functions $h,\tilde h$ which
minimize the right hand side of \eqref{7} under the boundary condition
\eqref{boundary1}.  %At the end we argue that the functions we found
%are correspond to some $(\beta,\hat\beta)$ which depend continuously
%on $\Phi_0$ and $W_{[0,\tau(0)]}$.

Using the same arguments as in
Lemma~\ref{lem:Gamma2}, we obtain that a unique minimizer
$(h,\tilde h)$ exists. Focusing first on minimizing the contributions to $\gamma$ collected
over $[\tau(0),T]$ (if $\tau(0)<T$), we moreover find that it satisfies
\begin{align}
h(t)&=\gamma v\frac{\sinh \left(\sqrt{\rho}(T-t)\right)}
{\sinh \left(\sqrt{\rho}\left(T-\tau(0)\right)\right)}, \quad  t\in [\tau(0),T]\label{7+}\\
\tilde h(t)&=\overline{\gamma} v\frac{\sinh \left(\sqrt{\rho}(T-t)\right)}
{\sinh \left(\sqrt{\rho}\left(T-\tau(0)\right)\right)}, \quad t\in [\tau(0),T]\label{7++}
\end{align}
for some $v=h(\tau(0))/\gamma=\tilde{h}(\tau(0))/\overline{\gamma}$ which we still need to
optimize over.

For the contributions to $\psi_1(h,\tilde{h})$ over the interval
$[0,\tau(0)]$, i.e., for the minimization of
\begin{align}
  \label{eq:30}
  \frac{1}{2}\int_{0}^{\tau(0)} \dot {\tilde h}^2(t)dt+\frac{\rho}{2}\int_{0}^{\tau(0)}\left(\gamma\left(w({\tau(0)})-w(t)\right)+
\gamma h\left(\tau(0)\right)+\overline{\gamma} \tilde h(t)\right)^2 dt
\end{align}
over $\tilde{h}$ subject to the yet to be optimally chosen starting value
$\tilde{h}(0)=u$ and the terminal value
$\tilde{h}(\tau(0))=\overline{\gamma}v$ fixed above,
we apply Theorem~3.2
in \cite{BSV:17} for our present deterministic setup. This result
shows that
\begin{align}\label{7+++}
\tilde \xi(t)&:=\frac{1-\gamma^2 \cosh \left(\overline{\gamma}\sqrt{\rho}(\tau(0)-t)\right)}{\overline{\gamma}
\cosh \left(\overline{\gamma}\sqrt{\rho}(\tau(0)-t)\right)}v\\
&\quad+\frac{\gamma\sqrt\rho}{\cosh \left(\overline{\gamma}\sqrt{\rho}\left(\tau(0)-t\right)\right)}\int_{t}^{\tau(0)} \sinh \left(\overline{\gamma}\sqrt{\rho}\left(\tau(0)-s\right)\right)\left(w(s)-w(\tau(0))\right)ds, \quad t\in [0,\tau(0)]
\end{align}
allows us to describe the minimizer $\underline{\tilde{h}}$ as the
solution to the linear ODE
\begin{align}\label{ODE}
\dot{\tilde h}(t)=\overline{\gamma}\sqrt{\rho}\coth
  \left(\overline{\gamma}\sqrt{\rho}(\tau(0)-t)\right)\left(\tilde
  \xi(t)-\tilde{h}(t)\right), \quad \tilde{h}(0)=u,
\end{align}
and the minimum value is
\begin{align}
&\Phi_0\overline{\gamma}\tilde h(0)+\frac{1}{2}\int_{0}^{\tau(0)} \dot {\tilde h}^2(t)dt\nonumber\\
&+
\frac{1}{2\Lambda}\int_{0}^{\tau(0)}\left(\gamma\left(w(\tau(0))-w(t)\right)+
\gamma h\left(\tau(0)\right)+\overline{\gamma} f(t)\right)^2 dt \nonumber\\
=\;&\Phi_0\overline{\gamma}u+\frac{1}{2}\overline{\gamma}\sqrt{\rho}\coth \left(\overline{\gamma}\sqrt{\rho}\tau(0)\right)\left(u-\hat\xi(0)\right)^2\label{20}\\
&+\frac{1}{2\Lambda}\int_{0}^{\tau(0)}\left(\frac{v}{\cosh \left(\overline{\gamma}\sqrt{\rho}\left(\tau(0)-t\right)\right)}+\gamma f(t)\right)^2 dt\nonumber
\end{align}
for
\begin{align}\label{4.21}
&f(t):=w(\tau(0))-w(t)\\
&+\frac{\overline{\gamma}\sqrt{\rho}}{\cosh \left(\overline{\gamma}\sqrt{\rho}\left(\tau(0)-t\right)\right)}
\int_{t}^{\tau(0)} \sinh \left(\overline{\gamma}\sqrt{\rho}\left(\tau(0)-s\right)\right)\left(w(s)-w(\tau(0))\right)ds.\nonumber
\end{align}
Minimizing over $u$ in \eqref{20} we obtain
\begin{equation}\label{8}
\underline{\tilde h}(0)=\hat\xi(0)-\Phi_0\tanh \left(\overline{\gamma}\sqrt{\rho}\tau(0)\right)/\sqrt{\rho}
\end{equation}
and we find the corresponding minimum value to be
\begin{align}\label{9}
&\Phi_0\overline{\gamma} \underline{\tilde h}(0)+\frac{1}{2}\int_{0}^{\tau(0)} \dot {\underline{\tilde h}}^2(t)dt\\
&+
\frac{1}{2\Lambda}\int_{0}^{\tau(0)}\left(\gamma\left(w(\tau(0))-w(t)\right)+
\gamma h\left(\tau(0)\right)+\overline{\gamma} \underline{\tilde h}(t)\right)^2 dt\\
=\;&\Phi_0\overline{\gamma}\hat\xi(0)-
\frac{1}{2 \sqrt{\rho}}\Phi^2_0 \overline{\gamma}\tanh \left(\overline{\gamma}\sqrt{\rho}\tau(0)\right)\\
&+\frac{1}{2\Lambda}\int_{0}^{\tau(0)}\left(\frac{v}{\cosh \left(\overline{\gamma}\sqrt{\rho}\left(\tau(0)-t\right)\right)}+\gamma f(t)\right)^2 dt.\nonumber
\end{align}
Next,  we plugin \eqref{7+}--\eqref{7++} and \eqref{9} into \eqref{7}. The result is
\begin{align}
  \psi_1(\underline{h},\underline{\tilde{h}})&=\Phi_0\gamma w(\tau(0))+\Phi_0\gamma^2 v+\frac{1}{2}\sqrt{\rho}\coth\left(\sqrt{\rho}(T-\tau(0)\right)v^2\label{21}\\
 &\quad+\Phi_0\overline{\gamma}\hat\xi(0)-
   \frac{1}{2 \sqrt{\rho}}\Phi^2_0\overline{\gamma}\tanh
   \left(\overline{\gamma}\sqrt{\rho}\tau(0)\right) \\
 &\quad+\frac{\rho}{2}\int_{0}^{\tau(0)}\left(\frac{v}{\cosh \left(\overline{\gamma}\sqrt{\rho}\left(\tau(0)-t\right)\right)}+\gamma f(t)\right)^2 dt
\end{align}
and from (\ref{7+++}) (for $t=0$) it follows that the minimum over $v$  is achieved
for
\begin{align}\label{29}
v:=-\sqrt{\rho}\frac{\frac{\Phi_0}{\rho\cosh\left(\overline{\gamma}\sqrt{\rho}
\tau(0)\right)}+\gamma\int_{0}^{\tau(0)}\frac{f(t)}{\cosh \left(\overline{\gamma}\sqrt{\rho}\left(\tau(0)-t\right)\right)}dt}
{\coth\left(\sqrt{\rho}
\left(T-\tau(0)\right)\right)+\frac{1}{\overline{\gamma}}\tanh\left(\overline{\gamma}\sqrt{\rho}
\tau(0)\right)}.
\end{align}
The minimizer $(\underline{a},\underline{\tilde{a}})$ can now be recovered from
\eqref{17}--\eqref{18},
\eqref{7+}--\eqref{ODE}, \eqref{4.21} and \eqref{29} and turns out to
be a continuous functional of $(\Phi_0, (w(t)-w(0))_{t \in
  [0,\tau(0)]})$. From
\eqref{7+}, \eqref{7+++}, \eqref{8}
\begin{align*}
&\gamma w(\tau(0))+\gamma h(\tau(0))+\overline{\gamma} \tilde h(0)\\
&=\gamma w(\tau(0))+
\frac{\nu}{
\cosh \left(\overline{\gamma}\sqrt{\rho}\tau(0)\right)}-
\overline{\gamma}\Phi_0\tanh \left(\overline{\gamma}\sqrt{\rho}\tau(0)\right)/\sqrt{\rho}\\
&\quad+\frac{\gamma\overline{\gamma}\sqrt\rho}{\cosh \left(\overline{\gamma}\sqrt{\rho}\tau(0)\right)}\int_{0}^{\tau(0)} \sinh \left(\overline{\gamma}\sqrt{\rho}\left(\tau(0)-s\right)\right)\left(w(s)-w(\tau(0))\right)ds\\
&=\bar{S}_0-S_0-\Lambda
\frac{\Upsilon'_0(\tau(0))}{\Upsilon_0(\tau(0))}\Phi_0
\end{align*}
where the last equality follows from \eqref{4.21}, \eqref{29}
and the Fubini theorem.
This together with \eqref{17}--\eqref{18} gives \eqref{26}.
Finally, \eqref{27} follows from
\eqref{21}--\eqref{29}.
\end{proof}

\begin{Lemma}\label{lem:Gamma3}
The functional $\Psi_3=\Psi_s(m,\tilde{m})$ given by~\eqref{opt3} is
minimized by $(\underline{m},\underline{\tilde{m}})=(0,0)$ which
  yields the minimum value $\Psi_3(\underline{m},\underline{\tilde{m}})= 0$. 
    \end{Lemma}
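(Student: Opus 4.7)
The proof is essentially immediate, because $\Psi_3$ is a sum of three manifestly non-negative terms each of which vanishes at the origin. My plan is to point this out and note that, in contrast with $\Psi_1$ and $\Psi_2$, there is no inhomogeneous forcing term to prevent the trivial choice.

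More precisely, the first two summands in~\eqref{opt3} are integrals of $m_{t,s}^2$ and $\tilde m_{t,s}^2$, hence non-negative, and the third summand is the integral of a perfect square in a linear combination of $m$ and $\tilde m$. In particular, every contribution to $\Psi_3(m,\tilde m)$ is $\geq 0$, so $\Psi_3(m,\tilde m)\geq 0$ for every square-integrable $(m,\tilde m)$. Substituting $(m,\tilde m)=(0,0)$ makes all three summands equal to zero simultaneously, so $\Psi_3(0,0)=0$ and the bound is attained.

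Compared with the analogous functionals $\Psi_1$ and $\Psi_2$, the key structural observation is that $\Psi_3$ is \emph{purely quadratic and homogeneous} in $(m,\tilde m)$: unlike~\eqref{opt1}, there is no linear coupling via $\Phi_0$, and unlike~\eqref{opt2}, the quadratic integrand carries no additive source term such as $\gamma\sqrt{\dot\tau(s)}\mathbb I_{s\geq \tau^{-1}(t)}$. Consequently the origin, where the quadratic form vanishes, is automatically a minimizer. Strict convexity of the first two summands (the entropy-like $\tfrac12\int m_{t,s}^2 \,dt\,ds + \tfrac12\int \tilde m_{t,s}^2\,dt\,ds$) forces $(\underline m,\underline{\tilde m})=(0,0)$ to be the unique minimizer, just as in the arguments used in Lemmas~\ref{lem:Gamma2} and~\ref{lem:Gamma1}. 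There is no real obstacle here; the only thing to verify is that the third summand is well-defined at $(0,0)$, which is immediate since the inner integrand vanishes identically in that case.
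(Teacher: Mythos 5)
Your proposal is correct and matches the paper's own (one-line) argument: $\Psi_3\geq 0$ because it is a sum of manifestly non-negative quadratic terms with no inhomogeneous forcing, and $(m,\tilde m)=(0,0)$ attains the value $0$, uniquely so by the strict positivity of the first two summands off the origin. Your additional structural comparison with $\Psi_1$ and $\Psi_2$ is a nice observation but not needed; the approach is the same as the paper's.
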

    \begin{proof}
    The statement is obvious since $\Psi_3\geq 0$ and
    $\Psi_3(m,\tilde m)=0$ if and only if $(m,\tilde m)=(0,0)$.
    \end{proof}

\subsection{Construction of the solution to the dual problem}

Having found the minimum values for the functionals $\Psi_1$,
$\Psi_2$, and $\Psi_3$ from Lemma~\ref{lem:2}, we conclude that their
sum yields a lower bound for the value of the dual
problem~\eqref{eq:dualproblem}. In fact, this lower bound is sharp since
we can construct a measure $\underline{\QQ} \in \cQ$ whose dual target value
coincides with it. For this consider the resolvent
kernel $\underline{k}$ associated with $\underline{l}$ via
\begin{align}
  \label{eq:24}
  \underline{l}_{t,s}+\underline{k}_{t,s} = \int_s^t\underline{l}_{t,r}\underline{k}_{r,s}dr,
  \quad 0 \leq s \leq t \leq T.
\end{align}
This kernel allows us to construct the solution
\begin{align}
  \label{eq:8}
  \underline{\theta}_t := \underline{a}_t-\int_0^t
  \underline{\kappa}_{t,r}\left(\underline{a}_r+\int_0^r\underline{l}_{r,s}dB_s\right)dr,
  \quad t \in [0,T],
\end{align}
to  the Volterra equation
\begin{align}
  \label{eq:23}
  \underline{\theta}_t = \underline{a}_t+\int_0^t
  \underline{l}_{t,s}dB_s + \int_0^t \underline{l}_{t,s} \underline{\theta}_sds.
\end{align}
Now, the probability $\QQ=\underline{\QQ}$ given
by~\eqref{eq:standarddensity} with $\theta=\underline{\theta}$ of
\eqref{eq:8} and
\begin{align}
  \label{eq:26}
  \tilde{\theta}=\underline{\tilde{\theta}}:=\underline{\tilde{a}}+\int_0^. \underline{\tilde{l}}_{.,s}(dB_s+\underline{\theta}_sds).
\end{align}
induces the Girsanov $\underline{\QQ}$-Brownian motion
$B^{\underline{\QQ}}$ of \eqref{eq:22} and, using that
$\underline{\theta}$ solves the Volterra
equation~\eqref{eq:23}, we find the Ito representations
\begin{align}
  \label{eq:25}
  \underline{\theta}_t &= \underline{a}_t + \int_0^t
  \underline{l}_{t,s}dB^{\underline{\QQ}}_s \text{ for  } t \in
                         [0,\tau^{-1}(T)], \\
  \underline{\tilde{\theta}}_t&=\underline{\tilde{a}}_t+\int_0^t \underline{\tilde{l}}_{t,s}dB^{\underline{\QQ}}_s
                              \text{ for } t \in [0,T].
\end{align}
As a consequence, $\underline{\theta}$, $\underline{\tilde{\theta}}$,
  $\underline{a}$, $\underline{\tilde{a}}$, $\underline{l}$,
  $\underline{\tilde{l}}$ and $\underline{m}=0$,
  $\underline{\tilde{m}}=0$ are related exactly in the same way as the
  corresponding quantities in Lemma~\ref{lem:2}. By the same calculation as
  in the proof of this lemma, the dual target value thus turns out to
  be
     \begin{align}
   &\mathbb E_{\underline{\mathbb Q}}\left[\Phi_0(S_T-S_0)+\log \frac{d\underline{\mathbb Q}}{d\mathbb P_0}+\frac{\rho}{2}\int_{0}^T\left|
\mathbb E_{\underline{\mathbb Q}}\left[S_T\middle|\mathcal
  G_t\right]-S_t\right|^2 dt\right]\nonumber\\
       &=\Psi_1(\underline{a},\tilde{\underline{a}})+\Psi_2(\underline{l},\underline{\tilde{l}})+\Psi_3(\underline{m},\underline{\tilde{m}}),
     \end{align}
where in the last line we are allowed to drop the expectation
$\E_{\underline{\QQ}}$ because the quantity there is deterministic anyway.
It follows that $\underline{\QQ}$ indeed attains the lower bound for the dual
problem.

Moreover, we learn from Lemmas~\ref{lem:Gamma1}, \ref{lem:Gamma2},
and~\ref{lem:Gamma3}, that the dual problem's value is given
by
\begin{align}
\inf_{\mathbb Q\in\mathcal Q}&\mathbb E_{\mathbb Q}\left[\Phi_0(S_T-S_0)+\log \frac{d\mathbb Q}{d\mathbb P_0}+\frac{\rho}{2}\int_{0}^T\left|
\mathbb E_{\mathbb Q}\left[S_T\middle|\mathcal
                               G^{\tau}_t\right]-S_t\right|^2 dt\right]\\
                             &=\Psi_1(\underline{a},\tilde{\underline{a}})+\Psi_2(\underline{l},\underline{\tilde{l}})+\Psi_3(\underline{m},\underline{\tilde{m}})\\
  &= -\frac{1}{2 \sqrt{\rho}}\tanh\left(\sqrt{\rho}T\right)\Phi^2_0\\
  &\qquad +\frac{1}{2}\int_{0}^T\!
\frac{\gamma^2\sqrt{\rho}}
{\overline{\gamma}\coth\left(\overline{\gamma}\sqrt{\rho}\left(t-\tau^{-1}(t)\right)\right)+
                     \tanh \left(\sqrt{\rho}\left(T-t\right)\right)}dt,
\end{align}
where the last identity holds in case
$\tau(0)=0$.

\subsection{Construction of the solution to the primal problem}

Having constructed the dual optimizer $\hat{\QQ}:=\underline{\QQ}$, we now can use~\eqref{eq:29} to
compute the optimal primal solution $\hat{\phi}$. For $t=0$, we use the representation of $S_T$ from~\eqref{eq:19} and the fact that $B$ and $W$ have, respectively, drift $\underline{\theta}$ and $\underline{\tilde{\theta}}$ under $\underline{\QQ}$. Hence, upon taking expectation with respect to $\underline{\QQ}$, we find
\begin{align}
    \E_{\underline{\QQ}}\left[S_T\right]&=\E_{\underline{\QQ}}\left[\gamma w(\tau(0))+\gamma\int_0^{\tau^{-1}(T)} \sqrt{\dot{\tau}(t)}\theta_t dt+\overline{\gamma}\int_0^T\underline{\tilde{\theta}}_tdt\right]\\
    &= \gamma w(\tau(0))+\gamma\int_0^{\tau^{-1}(T)} \sqrt{\dot{\tau}(t)}\underline{a}_t dt+\overline{\gamma}\int_0^T\underline{\tilde{a}}_t dt
    \\
&=\bar{S}_0-S_0-
\frac{\Upsilon'_0(\tau(0))}{\rho \Upsilon_0(\tau(0))}\Phi_0
\end{align}
where the last identity is just~\eqref{26}. It follows that the optimal initial turnover rate is
\begin{align}
    \hat{\phi}_0=\rho(\bar{S}_0-S_0)-\frac{\Upsilon'_0(\tau(0))}{ \Upsilon_0(\tau(0))}\Phi_0.
\end{align}
Invoking the same dynamic programming argument as in Lemma~3.5 of~\cite{BDR:22}, we conclude that the analogous formula holds for arbitrary $t$. This establishes optimality of the strategy described in Theorem~\ref{thm:1}.

\bibliographystyle{plainnat} \bibliography{finance}

\end{document}